\newif\ifEditMode
\newtheorem{theorem}{Theorem}[section]
\newtheorem{note}{Note}[section]
\newtheorem{definition}[theorem]{Definition}
\newtheorem{corollary}[theorem]{Corollary}
\newtheorem{lemma}[theorem]{Lemma}
\newtheorem{proposition}[theorem]{Proposition}
\newcounter{assumption}[section]
\renewcommand{\setminus}[0]{-}
\newcommand{\setunion}[0]{\boldsymbol{\cup}}
\newcommand{\setint}[0]{\boldsymbol{\cap}}
\newcommand{\setArg}[2]{\left\{ {#1} \,\,\middle|\,\, {#2} \right\}}
\newcommand{\card}[1]{\left| {#1} \right|}
\newcommand{\transp}{\top}
\newcommand{\shared}[0]{\textsf{shared}}
\newcommand{\illegal}[0]{\textsf{Illegal}}
\newcommand{\comp}[0]{\textsf{Cmp}}
\newcommand{\exten}[0]{\textsf{ExtEn}}
\newcommand{\extdest}[0]{\textsf{ExtDest}}
\newcommand{\epsclosure}[0]{\textsf{IntReach}}
\newcommand{\mirror}[0]{mirror\xspace}
\newcommand{\id}[0]{\text{id}}
\newcommand{\flip}[0]{\text{flip}}
\newcommand{\structName}[0]{preordered heap\xspace}
\newcommand{\StructName}[0]{Preordered heap\xspace}
\newcommand{\compStructName}[0]{sieved heap\xspace}
\newcommand{\CompStructName}[0]{Sieved heap\xspace}
\newcommand{\propName}[0]{admissibility\xspace}
\newcommand{\sgroup}[0]{S}
\newcommand{\inv}[0]{\gamma}
\newcommand{\smult}[0]{\mu}
\newcommand{\tmult}[0]{\tau}
\newcommand{\smultL}[1]{\smult_{#1}}
\newcommand{\smultR}[1]{\smult^{#1}}
\newcommand{\sadj}[0]{\hat \smult}
\newcommand{\sadjL}[1]{\left(\smult_{#1}\right)^{*}}
\newcommand{\sadjR}[1]{\left(\smult^{#1}\right)^{*}}
\newcommand{\tmultL}[1]{\tmult_{#1}}
\newcommand{\tmultR}[1]{\tmult^{#1}}
\newcommand{\tadj}[0]{\hat \tmult}
\newcommand{\tadjL}[1]{\left(\tmult_{#1}\right)^{*}}
\newcommand{\tadjR}[1]{\left(\tmult^{#1}\right)^{*}}
\renewcommand{\paragraph}[1]{\noindent\textbf{#1}}
\title{The Quotient in Preorder Theories}
\author{
{\'I}{\~n}igo X. {\'I}ncer Romeo
\institute{University of California, Berkeley, USA}
\email{inigo@eecs.berkeley.edu}
\and
Leonardo Mangeruca
\institute{Raytheon Technologies Research Center,  Rome, Italy}
\email{leonardo.mangeruca@rtx.com}
\and
Tiziano Villa
\institute{Universit{\` a} di Verona, Italy}
\email{tiziano.villa@univr.it}
\and
Alberto Sangiovanni-Vincentelli
\institute{University of California, Berkeley, USA}
\email{alberto@eecs.berkeley.edu}
}
\begin{document}
\maketitle

\begin{abstract}
Seeking the largest solution to an expression of the form $A x \le B$ is a common task in
several domains of engineering and computer science. This largest solution is
commonly called quotient. Across domains, the meanings of the binary
operation and the preorder are quite different, yet the syntax for computing the largest
solution is remarkably similar. This paper is about finding a
common framework to reason about quotients. 
We only assume we operate on a
preorder endowed with an abstract monotonic multiplication and an involution. We
provide a condition, called \propName, which guarantees the existence of the quotient, and which yields
its closed form. We call \structName{s} those
structures satisfying the \propName condition. We show that many existing
theories in computer science are \structName{s}, and we are thus able to
derive a quotient for them, subsuming existing solutions when available
in the literature. We introduce the concept of \compStructName{s} to deal with
structures which are given over multiple domains of definition. We show that
\compStructName{s} also have well-defined quotients.
\end{abstract}

\section{Introduction}

The identification of missing objects is a common task in engineering. Suppose
an engineer wishes to implement a design with a mathematical description $B$, and will use a component with a description $A$ to implement this design. In order to find out what needs to be added to $A$ in order to implement $B$, the engineer seeks a component $x$ in an expression of the form $A \bullet x = B$, where $\bullet$ is an operator yielding the composite of two design elements. Many compositional theories include the notion of a preorder, usually called refinement. The statement $A \le C$ usually reads ``$A$ refines $C$'' or ``$A$ is more specific than $C$.'' In this setting, the problem is recast as finding an $x$ such that $A \bullet x \le B$. It is often assumed that the composition operation is monotonic with respect to this preorder. Therefore, if $x$ is a solution, so is any $y$ satisfying $y \le x$. This focuses our attention on finding the largest $x$ that satisfies the expression. The literature often calls this largest solution \emph{quotient}. 

\subsection{Background}
The logic synthesis community has been a pioneer in defining and 
solving special cases of the quotient problem for combinational and sequential
logic circuit design (\cite{kim72,cerny-marin77}) under names like circuit 
rectification or engineering change or component replacement. 
In combinational synthesis, much work has been reported to support algebraic 
and Boolean division: given dividend $f$ and divisor $g$, find the quotient 
$q$ and remainder $r$ such $f = q\cdot g + r$ (for $\cdot ,+$ standard Boolean operators AND and OR, respectively), 
as key operation to restructure multi-level Boolean networks~\cite{iwls10:ch2}.
The quotient problem for combinational circuits was formulated as a general
replacement problem in~\cite{burch-comb-cycle}: given the 
combinational circuits $A$ and $C$ whose synchronous composition produces 
the circuit specification $B$, what are the legal replacements of $C$ that
are consistent with the input-output relation of $B$? The valid replacements
for $C$ were defined as the combinational circuits $x$ such that 
$A \circ x \subseteq B$, and the largest solution for $x$ was characterized by
the closed formula $x = \left(A \circ B^{\bot}\right)^{\bot}$,
where $(\cdot)^\bot$ is a unary operator that complements the input-output relation 
of the circuit to which it is applied (switching the inputs and outputs),
while a hiding operation gets rid of the internal signals.

In sequential optimization, the typical question addressed was, 
given a finite-state machine (FSM) $A$, find an FSM $x$
such that their synchronous composition produces an FSM behaviorally equivalent 
to a specification FSM $B$, i.e., solve over FSMs the equation 
$A \circ x = B$, where $\circ$ is synchronous composition and equality
is FSM input-output equivalence.
Various topologies were solved, starting with serial composition where 
the unknown was either the head or tail machine, to more complex 
interconnections with feedback.
As a matter of fact, sometimes both $A$ and $x$ were known, but the goal
was to change them into FSMs yielding better logical implementations, while
preserving their composition, with the objective to optimize a sequential 
circuit by computing and exploiting the flexibility due to its modular 
structure and its environment (see~\cite{iwls10:ch2,iwls10:ch3,iwls10:ch9}). 
An alternative formulation of FSM network synthesis was provided
by encoding the problem in the logic WS1S (Weak Second-Order Logic of 
1 Successor), which enables to characterize the set of permissible behaviors
at a node of a given network of FSMs by WS1S formulas~\cite{S1S-tcad2000},
corresponding to regular languages and so to effective operations on finite 
state automata.
\footnote{A detailed survey of previous work in this area can be found in~\cite{fsm1-book,villa12}.}

Another stream of contributions has been motivated by component-based design 
of parallel systems with an interleaving semantics (denoted in our exposition 
by the composition operator $\diamond$). The problem is stated 
by Merlin and Bochmann~\cite{boch83} as follows:
``Given a complete specification of a given module and 
the specifications of some submodules, the method described below provides 
the specification of an additional submodule that, together with the other 
submodules, will provide a system that satisfies the specification of the 
given module.''
The problem was reduced to solving equations or inequalities over process 
languages, which are usually prefix-closed regular languages represented 
by labeled transition systems.
A closed-form solution of the  inequality $A \diamond x \subseteq B$ over prefix-closed regular languages, written as $proj_x(A \diamond B) \setminus proj_x(A \diamond \overline{B})$
(where $proj_x$ is a projection over the alphabet of $x
$), was given in~\cite{boch83,haghverdi99}.\footnote{For a discussion about the maximality of this solution and for more 
references, we refer to~\cite{villa12}, Sec. 5.2.1.}
This approach to solve the equation 
$A \diamond x = B$ has been further extended to obtain restricted solutions 
that satisfy properties such as safety and liveness, or are restricted to 
be FSM languages, which need to be input-progressive and avoid divergence
(see~\cite{haghverdi99,bochmann-deds2013,villa12}).
The quotient problem has been investigated also for delay-insensitive processes
to model asynchronous sequential circuits, 
see~\cite{verhoeff-mpc1989,mallon99,negu-concur00}. 
Equations of the form $A \diamond x \leq B$ were defined, and their largest 
closed-form solutions were written as $x = \left(A \diamond B^{\sim}\right)^{\sim}$,  
where $(\cdot)^{\sim}$ is a suitable unary operation.

An important application from discrete control theory is the model matching problem:
design a controller whose composition with a plant matches a given specification 
(see~\cite{barrett-deds98,mm-tac01}).
Another significant application of the quotient computation
has been the protocol design problem, and in particular, the protocol 
conversion problem (see~\cite{lam-tse-1988,green-tcomm-1986,DBLP:conf/dac/PasseroneRS98,DBLP:conf/iccad/PasseroneAHS02,kumar-conv97,hallal-conv2000,fujita-aspdac2007,Castagnetti:SEFM-2015}).
Protocol converter synthesis has been studied also over a variant of 
Input/Output Automata (IOA,~\cite{lynch-cwi89}), called Interface Automata 
(IA,~\cite{deAlfaro2001,dealfaro-interface03}), yielding a similar quotient equation 
$A \diamond_{IA} x \subseteq B$ and closed-form solution 
$\left(A \diamond_{IA} B^{\bot}\right)^{\bot}$, where $\diamond_{IA}$
is an appropriate interleaving composition defined for interface automata, 
and $(\cdot)^{\bot}$ is again a unary operation~\cite{bhaduri08}.

Some research focused on modal specifications represented by automata whose 
transitions are typed with {\em may} and {\em must} modalities, as in~\cite{larsen-xinxin1990,raclet2011modal}, with a solution of the quotient
problem for nondeterministic automata provided in ~\cite{benes2013}.
It is outside the scope of this paper to address the quotient problem for real-time and hybrid systems (see~\cite{cassez2000,bouyer2011} for verification and control in such settings).


As seen above, the quotient problem was studied
by different research communities working on various application domains
and formalisms. Often similar formulations and solutions were reached
albeit obfuscated by the different notations and objectives of the synthesis 
process.
This motivated a concentrated effort to distill the core of the problem, 
modeling it as solving equations over languages of the form 
$A \parallel x \preceq B$, where $A$ and $B$ are known components and $x$ is unknown, 
$\parallel$ is a composition operator, and $\preceq$ is a conformance relation
(see~\cite{ucp-ieeeproc2015} and the monograph~\cite{villa12} for full 
accounts).  
The notion of language was chosen as the most basic formalism to specify 
the components of the equation, and language containment $\subseteq$ was 
selected as conformance relation.
Two basic composition operators were defined each encapsulating a family
of variants: synchronous composition ($\bullet$) modeling the classical 
step-lock coordination, and interleaving composition ($\diamond$)
modeling asynchrony by which components may progress at different rates
(there are subtle issues in comparing the two types, as mentioned 
in~\cite{kurshan-asynch99,fsmeq1-iwls2004}).
Therefore two language equations were defined:
$A \bullet x \subseteq B$ and $A \diamond x \subseteq B$, where
the details of the operations to convert alphabets according 
to the interconnection topologies are hidden in the formula.
It turned out that the largest solutions have the same structure, respectively,
$\overline{A \bullet \overline{B}}$ and $\overline{A \diamond \overline{B}}$. This led to investigate the algebraic properties required by the 
composition operators to deliver the previous largest closed-form solutions
to unify the two formulas~\cite{ucp-ieeeproc2015}. This effort assumed that the underlying objects were sets, and that their operations were given in terms of set operations. This work, thus, could not account for quotient computations in more complex theories, like interface automata.

As a parallel development, in recent years we have seen the growth of a
rigorous theory of system design based on the algebra of contracts
(see the monograph~\cite{BenvenisteContractBook}). In this theory, a strategic
role is played by assume-guarantee (AG) contracts, in which the \textit{missing
component problem} arises: when the given components are not capable of
discharging the obligations of the requirements, define a quotient operation
that computes the contract for a component, so that by its addition to the
original set the resulting system fulfills the requirements. The quotient of AG
contracts was completely characterized very recently by a closed-form solution
proved in~\cite{agquotient}. Once again, the syntax of the quotient has the form $\left(A \parallel B^{-1}\right)^{-1}$ for contracts $A$ and $B$ and standard contract operations.

In summary, even though
the concrete models of the components, composition operators, conformance
relations and inversion functions vary significantly across chosen models and
application domains, the quotient formulas have similar syntax across theories.

\subsection{Motivation and contributions}
The motivation of this paper is to \textit{propose the
underlying mathematical structure common to all these instances of quotient
computation to be able to derive directly the solution formula for any equation
satisfying the properties of this common structure.}

We show that we can compute the quotient by only 
assuming the axioms of a {\em preorder}, enriched with a binary operation of
{\em source multiplication} and a unary {\em involution} operation. 
In particular we introduce the new algebraic notion of {\em \structName{s}}
characterized by a condition, called {\em \propName}, which 
guarantees the existence of the solution and yields a closed form for it. 
Then we show that a number of theories in computer science meet this condition,
e.g., Boolean lattices, AG contracts,
and interface automata; so for all of them we are able to (re-)derive axiomatically the formulas that compute
their related quotients.
We also introduce the concept of {\em \compStructName{s}}
to deal with structures defined over multiple domains, 
and we show that the equations $A \bullet x \le B$ admit a solution also
over \compStructName{s}, generalizing the known solutions of equations 
on languages over multiple alphabets with respect to synchronous 
and interleaving composition, well studied in the literature.

\subsection{Organization}
The paper is structured as follows. Sec.~\ref{sec:preordHeaps} develops the basic
mathematical machinery of \structName{s}, whereas Sec.~\ref{sec:preordHeapInstances}
shows that various theories are \structName{s}. Sec.~\ref{sec:sievedHeaps}
introduces \compStructName{s}, whereas Sec.~\ref{sec:sieved-langeq} applies
them to equations over languages with multiple alphabets.
Sec.~\ref{sec:conclusions} concludes. Some proofs are omitted due to space constraints.

\section{\StructName{s}}
\label{sec:preordHeaps}

In this section we introduce an algebraic structure for which the existence of quotients is guaranteed. We show in Section~\ref{sec:preordHeapInstances} that many theories in computer science are instances of this concept. First we introduce the notation we will use:

    \begin{itemize}[leftmargin=*]
    \item Let $P$ be a set and let $\smult\colon P \times P \to P$ be a binary operation on $P$. For any element $a \in P$, we let $\smultL{a}\colon P \to P$ be the function $\smultL{a} = \smult \circ (a \times \id)$, where $\id$ is the identity operator and $(a \times \id)\colon P \to P^2$ is the unary function $(a \times \id)\colon b \mapsto (a, b)$. Similarly, we let $\smultR{a} = \smult \circ (\id \times a)$. If we call $\mu$ multiplication, $\mu_a$ is left multiplication by $a$, and $\mu^a$ is right multiplication by $a$.
    
    \item For any set $P$, we let the mapping $\flip\colon P \times P \to P \times P$ be $\flip(a,b) = (b , a)$ ($a, b \in P$).
    
    \item Consider a set $P$ and a binary relation  $\le$ on $P$. Then $ \le $ is a preorder
    if it is reflexive and transitive; i.e., for all $a, b$ and $c$ in $P$, we have 
    $a \le a~ \text{(reflexivity)}$ and if 
    $a \le b~ \text{and}~ b \le c ~\text{then} ~ a \le c$ (transitivity).
    If a preorder is antisymmetric, ($a \le b$ and $b \le a$ implies $a = b$), then it is a partial order.

    \item Let $(P, \le)$ be a preorder and let $a, b \in P$. If $a \le b$ and $b \le a$, we write $a \simeq b$.

    \item Let $F \colon P \to P$. We say that $F$ is monotonic or order-preserving if $a \le b \Rightarrow Fa \le Fb$ for all $a, b \in P$. Similarly, we say that $F$ is antitone or order-reversing if $a \le b \Rightarrow Fb \le Fa$ for all $a, b \in P$.
    
    \item Suppose that $L, R\colon P \to P$ are two monotonic maps on $P$. We say that $(L,R)$ form an adjoint pair, or that $L$ is the left adjoint of $R$ ($R$ is respectively the right adjoint of $L$), or that the pair $(L, R)$ forms a Galois connection when for all $b, c \in P$, we have $L b \le c$ if and only if $b \le R c$.
    
    \item Let $F, G\colon P \to P$ be functions on a preorder $P$. We say that $F \le G$ when $F a \le G a$ for all $a \in P$.
\end{itemize}

\subsection{The concept of \structName}

As we discussed in the introduction, many times in engineering and computer science one encounters expressions of the form $A \bullet x \le B$, and one wishes to solve for the largest $x$ that satisfies the expression. The symbols have different specific meanings in the various domains, yet in all applications we know, the syntax for computing the quotient always has the form $\overline{A \bullet \overline{B}}$, where $\overline{(\cdot)}$ is an involution (i.e., a unary operator which is its own inverse). To give meaning to the inequality, at a minimum we need a preorder and a binary operation; to give meaning to the quotient expression, we need to assume the existence of an involution. In all compositional theories, the refinement order has the connotation of specificity: if $a \le b$ then $a$ is a refinement of $b$. The binary operation is usually interpreted as composition. The product $a \bullet b$ is understood as the design obtained when operating both $a$ and $b$ in a topology given by the mathematical description of each component. The unary operation is sometimes understood as giving an external view on an object. If a component has mathematical description $a$, then $\overline{a}$ gives the view that the environment has of the design element. In Boolean algebras, this unary operation is negation. In interface theories, it's usually an operation which switches inputs and output behaviors.

We thus introduce an algebraic structure consisting of a preorder, a binary operation which is monotonic in both arguments, and an involution which is antitone. We have called the binary operation \emph{source multiplication} for reasons having to do with category theory: we will show that this operation serves as the left functor of an adjunction. Therefore, its application to an object of the preorder yields the \emph{source} of one of the two arrows in the adjunction. Why not simply call it multiplication? Because source multiplication together with the involution generate another binary operation. This second operation we call \emph{target multiplication} because its application to an object yields the \emph{target} of one of the arrows in the adjunction. The unary operation will simply be called \emph{involution}.

The algebraic structure will be called \emph{\structName}. The inspiration came from engineering design. In some design methodologies, design elements at the same level of abstraction are not comparable in the refinement order. Indeed, a refinement of a design element usually yields a design element in a more concrete layer. But we are placing all components under the same mathematical structure. This suggested the name \emph{heap}. We add the adjective \emph{preorder} simply to differentiate the concept from existing algebraic heaps. We are ready for the definition:

\begin{definition}\label{df:pgroup}
    A \emph{\structName} is a structure $(P, \le, \smult, \inv)$, where $(P, \le)$ is a
    preorder; $\smult\colon P \times P \to P$ is a binary operation on $P$, monotonic in both arguments,
    called \emph{source multiplication}; and $\inv\colon P \to P$ is an antitone operation on $P$ called \emph{involution}. These operations satisfy the following axioms:
    \begin{compactitem}
        \item A1: $\inv^2 = \id$.
        \item A2a (left \propName): $\smultL{a} \circ \inv \circ \smultR{a} \circ \inv \le \id \quad\quad(a \in P)$.
        \item A2b (right \propName): $\smultR{a} \circ \inv \circ \smultL{a} \circ \inv \le \id \quad\quad(a \in P)$.
    \end{compactitem} 
\end{definition}

\begin{note}\label{kjdhaoiuybdp9a}
    In Definition \ref{df:pgroup}, we did not assume commutativity in $\smult$. If $\smult$ is commutative, we have $\smult = \smult \circ \flip$, so $\smultL{a} = \smult \circ (a \times \id)= \smult \circ \flip \circ (a \times \id) = \smult \circ (\id \times a)  = \smultR{a}$. It follows that for a commutative \structName, axioms A2a and A2b become
    \begin{equation}\label{eq:commReg}(\smultL{a} \circ \inv)^2 \le \id.\end{equation} 
\end{note}

We have discussed all elements in the definition of a \structName, except for the \propName conditions. What are they? Consider left \propName: $\smultL{a} \circ \inv \circ \smultR{a} \circ \inv \le \id$. Let $b \in P$ and set $B = (\inv \circ \smultR{a} \circ \inv) (b)$. Left \propName means that $B$ satisfies the expression $\smult(a, x) \le b$. Similarly, set $C = (\inv \circ \smultL{a} \circ \inv) (b)$. Right \propName means that $C$ satisfies $\smult(x, a) \le b$. When $\smult$ is commutative, we of course have $B = C$.
We will soon show a surprising fact: the axioms of a \structName are sufficient to guarantee that $B$ and $C$ are in fact the largest solutions to both expressions, i.e., $B$ and $C$ are the quotients for left and right source multiplication, respectively.
We show this immediately after introducing an important binary operation called target multiplication, but first we consider an example.

\paragraph{Example.} Consider a Boolean lattice $B$. The lattice is clearly a preorder. Take the involution to be the negation operator. This is an antitone operator and satisfies A1: $\neg \neg b = b$ for all $b \in B$. Take source multiplication to be the meet of the lattice (i.e., logical AND). This operation is monotonic in the preorder. Since this source multiplication is commutative, the \propName conditions reduce to checking \eqref{eq:commReg}. For $a, b \in B$, we have
$(\mu_a \circ \gamma)^2 b = a \land \neg(a \land \neg b) = a \land (\neg a \lor b) = a \land b \le b$. Thus, the Boolean lattice satisfies the \propName conditions, making it a \structName.

\subsection{Target multiplication}\label{sc:targetMult}

For the rest of this section, let $(P, \le, \smult, \inv)$ be a \structName. We define the \emph{target multiplication} $\tmult\colon P \times P \to P$ as 
$
\tmult = \inv \circ \smult \circ (\inv \times \inv)
$. Since $\inv^2 = \id$ (axiom A1), we can also write $\smult = \inv \circ \tmult \circ (\inv \times \inv)$, i.e., the diagram
{\scriptsize
\begin{tikzcd}[sep = small]
    P \times P \arrow[d, "\inv \times \inv\,"', leftrightarrow] 
    \arrow[r, "\smult"]
    & P \\
    P\times P \arrow[r, "\tmult"] & P \arrow[u, "\inv"', leftrightarrow]
\end{tikzcd}}
commutes.

We could have defined a \structName in terms of target multiplication instead of source multiplication. The two operations are closely linked. In fact, we will see in the next section that these operations form an adjoint pair.

\paragraph{Example.} We showed that Boolean lattices are \structName{s}. For $B$ a Boolean lattice and $a, b \in B$, we have $\tmult(a, b) = \gamma \circ \smult (\gamma a, \gamma b) = \neg (\neg a \land \neg b) = a \lor b$. This suggests that the relation between source and target multiplications is a generalization of De Morgan's identities for Boolean algebras.

We will use the following identities: for $a \in P$,
\begin{align}\label{xhjdw98dsqn}
    \begin{aligned}
        \smultL{a} &= \inv \circ \tmult \circ (\inv \times \inv) \circ (a \times \id) =
        \inv \circ \tmult \circ (\inv a \times \id) \circ \inv = 
        \inv \circ \tmultL{\inv a} \circ \inv \quad\quad \text{and}
        \\
        \smultR{a} &= \inv \circ \tmult \circ (\inv \times \inv) \circ (\id \times a) =
        \inv \circ \tmult \circ (\id \times \inv a) \circ \inv = 
        \inv \circ \tmultR{\inv a} \circ \inv.
    \end{aligned}
\end{align}

\subsection{Solving inequalities in \structName{s}}

For $a, b \in P$, we are interested in the conditions under which we can find
the largest $x \in P$ such that $\smult(a,x) \le b$. The following theorem says that source multiplication in a \structName is ``invertible.''

\begin{theorem}\label{9721w91hja}
    Let $(P, \le, \smult, \inv)$ be a \structName and let $\tmult$ be its target multiplication. Then 
    for $a \in P$, $(\smultL{a}, \tmultR{\inv a})$ and $(\smultR{a}, \tmultL{\inv a})$
    are adjoint pairs.
\end{theorem}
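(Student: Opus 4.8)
The plan is to reduce the statement to the standard characterization of a Galois connection: a pair $(L,R)$ of monotone maps forms an adjoint pair exactly when it satisfies the counit inequality $L \circ R \le \id$ and the unit inequality $\id \le R \circ L$. (This is equivalent to the defining biconditional $Lb \le c \iff b \le Rc$: the counit is obtained by taking $b = Rc$ and the unit by taking $c = Lb$, and conversely the biconditional is recovered by applying $R$, resp.\ $L$, to one side and chaining with unit, resp.\ counit.) The first move, which is what makes the admissibility axioms usable, is to rewrite the two candidate right adjoints in terms of source multiplication. Using axiom A1 together with the identities in \eqref{xhjdw98dsqn}, one gets $\tmultR{\inv a} = \inv \circ \smultR{a} \circ \inv$ and $\tmultL{\inv a} = \inv \circ \smultL{a} \circ \inv$, turning both claimed adjunctions into statements purely about $\smult$ and $\inv$.

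Next I would verify monotonicity of all four maps. The maps $\smultL{a}$ and $\smultR{a}$ are monotone because $\smult$ is monotone in both arguments. Each candidate adjoint has the form $\inv \circ (\cdot) \circ \inv$ with a monotone map in the middle; since $\inv$ is antitone, sandwiching a monotone map between two antitone maps yields a monotone map, so $\tmultR{\inv a}$ and $\tmultL{\inv a}$ are monotone as required.

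It then remains to establish the unit and counit for each pair. For $(\smultL{a}, \tmultR{\inv a})$ the counit inequality is $\smultL{a} \circ \tmultR{\inv a} = \smultL{a} \circ \inv \circ \smultR{a} \circ \inv \le \id$, which is precisely axiom A2a. For the unit inequality I would start from axiom A2b, namely $\smultR{a} \circ \inv \circ \smultL{a} \circ \inv \le \id$, compose the antitone map $\inv$ on the outside (on the left), which reverses the order to give $\inv \circ \smultR{a} \circ \inv \circ \smultL{a} \circ \inv \ge \inv$, and then compose $\inv$ on the inside (on the right) and cancel $\inv \circ \inv = \id$ by A1, obtaining $\inv \circ \smultR{a} \circ \inv \circ \smultL{a} \ge \id$, i.e.\ $\id \le \tmultR{\inv a} \circ \smultL{a}$. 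The pair $(\smultR{a}, \tmultL{\inv a})$ is handled by the mirror-image argument with the roles of A2a and A2b interchanged: its counit is A2b directly, and its unit follows from A2a by the same two compositions with $\inv$ and one use of A1.

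The only real subtlety — more a matter of careful bookkeeping than a genuine obstacle — is tracking the direction of inequalities between maps under composition with the involution: composing $\inv$ on the outside flips $\le$ into $\ge$, whereas composing any fixed map on the inside preserves the direction. Once the rewritings $\tmultR{\inv a} = \inv \circ \smultR{a} \circ \inv$ and $\tmultL{\inv a} = \inv \circ \smultL{a} \circ \inv$ are in place, the counit inequalities are literally the admissibility axioms and each unit inequality follows from the \emph{other} axiom by a single application of A1, so no substantive computation is left.
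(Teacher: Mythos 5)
Your proof is correct and is essentially the paper's argument in different packaging: the paper proves the biconditional $\smultL{a}(b)\le c \Leftrightarrow b \le \tmultR{\inv a}(c)$ directly, using A2a for the forward implication and, for the converse, exactly your sequence of composing with $\inv$, cancelling $\inv\circ\inv$ via A1, and invoking A2b. Your only departures are routing the converse through the unit/counit characterization of Galois connections rather than chasing the elements $b,c$, and explicitly verifying that $\tmultR{\inv a}$ and $\tmultL{\inv a}$ are monotone (which the paper's definition of adjoint pair requires but its proof leaves implicit); both are sound.
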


\begin{proof}
    Let $b, c \in P$ with $b \le \tmultR{\inv a}(c)$. We have $\smultL{a}(b) \le (\smultL{a} \circ \tmultR{\inv a}) (c) = (\smultL{a} \circ \inv \circ \smultR{a} \circ \inv)(c) \le c$, by left \propName (by A2a).
    
    Conversely, assume that $\smultL{a}(b) \le c$. Then
    \begin{align*}
        \smultL{a} \circ \inv^2 (b) &\le c & \text{(by A1)}\\
        \inv \circ (\smultL{a} \circ \inv) (\inv b) &\ge \inv (c) \\
        (\smultR{a} \circ \inv) \circ (\smultL{a} \circ \inv) (\inv b) &\ge (\smultR{a} \circ \inv) (c) \\
        (\inv b) &\ge (\smultR{a} \circ \inv) (c) & \text{(by A2b)} \\
        b &\le  (\inv \circ \smultR{a} \circ \inv) (c) = \tmultR{\inv a} (c). & \text{(by A1)}
    \end{align*}
    The adjointness of $(\smultR{a}, \tmultL{\inv a})$ follows from a similar reasoning.
\end{proof}

The fact that $(\smultL{a}, \tmultR{\inv a})$ is an adjoint pair means that left source multiplication by $a$ is ``inverted'' by right target multiplication by $\inv a$, i.e.,
\[
\mu(a, x) \le b \quad\text{if and only if}\quad x \le \tau(b, \inv a).
\]
In other words, the largest solution of $\mu(a, x) \le b$ is $x = \tau(b, \inv a)$. Using the familiar multiplicative notation for source multiplication, and $(\cdot)/a = \tmultR{\inv a}$ for ``right division by $a$,'' we have shown that the largest solution of $a x \le b$ is $x = b / a$. Calling $a \backslash (\cdot) = \tmultL{\inv a}$ ``left division by $a$,'' we have shown that the largest solution of $x a \le b$ is $x = a \backslash b$. These two divisions are related as follows:

\begin{corollary}[Isolating the unknown] 
    Let $P$ be a \structName and $a, x, y \in P$. Then $y \le a / x$ if and only if $x \le y \backslash a$.    
\end{corollary}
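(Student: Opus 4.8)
The plan is to reduce \emph{both} inequalities in the statement to one and the same inequality, namely $\smult(x,y) \le a$, by invoking the two adjunctions established in Theorem~\ref{9721w91hja}. The corollary is really just a restatement, in division notation, of the fact that left and right source multiplication by a fixed element have the displayed quotients as right adjoints; once each side is rewritten as the appropriate adjunction condition, the two conditions turn out to be literally the same, and the equivalence follows.

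First I would unwind the division notation introduced just before the corollary. Since $(\cdot)/x = \tmultR{\inv x}$ denotes right division by $x$, we have $a/x = \tmultR{\inv x}(a)$; and since $y\backslash(\cdot) = \tmultL{\inv y}$ denotes left division by $y$, we have $y\backslash a = \tmultL{\inv y}(a)$. Now I instantiate Theorem~\ref{9721w91hja} with parameter $x$, which asserts that $(\smultL{x}, \tmultR{\inv x})$ is an adjoint pair, hence
\[
y \le a/x = \tmultR{\inv x}(a) \quad\Longleftrightarrow\quad \smultL{x}(y) \le a .
\]
Because $\smultL{x}(y) = \smult \circ (x \times \id)\,(y) = \smult(x,y)$, the right-hand side is exactly $\smult(x,y) \le a$.

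Symmetrically, instantiating Theorem~\ref{9721w91hja} with parameter $y$ gives that $(\smultR{y}, \tmultL{\inv y})$ is an adjoint pair, so
\[
x \le y\backslash a = \tmultL{\inv y}(a) \quad\Longleftrightarrow\quad \smultR{y}(x) \le a ,
\]
and $\smultR{y}(x) = \smult \circ (\id \times y)\,(x) = \smult(x,y)$, so this right-hand side is again $\smult(x,y) \le a$. Chaining the two equivalences through their common middle term $\smult(x,y) \le a$ yields $y \le a/x \iff x \le y\backslash a$, as claimed. There is no genuine obstacle here beyond careful bookkeeping: the only thing to watch is that the element playing the role of the fixed parameter ``$a$'' in Theorem~\ref{9721w91hja} is $x$ in the first application and $y$ in the second---\emph{not} the element $a$ appearing in the corollary---and that $\smultL{x}(y)$ and $\smultR{y}(x)$ both genuinely denote the single product $\smult(x,y)$.
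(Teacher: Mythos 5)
Your proof is correct and follows exactly the paper's own argument: both inequalities are rewritten via the two adjunctions of Theorem~\ref{9721w91hja} as the common condition $\smult(x,y) \le a$, and the equivalence is obtained by chaining. You simply spell out the bookkeeping (which element plays the fixed parameter in each application) more explicitly than the paper does.
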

\begin{proof}
    By two applications of Theorem \ref{9721w91hja}, we obtain
    $y \le a / x = \tmultR{\inv x}(a) \Leftrightarrow
    \smult(x,y) \le a \Leftrightarrow
    x \le \tmultL{\inv y}(a) = y \backslash a$.
\end{proof}

Theorem \ref{9721w91hja} is our main result. It shows that \structName{s} have sufficient structure for the computation of quotients. When we prove that a structure is a \structName, this theorem immediately yields the existence of an adjoint for multiplication, and its closed form.

In general, to show that a theory is a \structName, we must identify its involution and source multiplication. Then we have to verify the \propName conditions. How difficult is that? Our original problem was identifying the largest $x$ satisfying $\mu(a,x) \le b$ for some notion of multiplication $\mu$, involution $\gamma$, and preorder $\le$. As we discussed, left \propName requires that $\tmult^{\inv a} b$ satisfies the inequality $\smult(a,x) \le b$, and right \propName requires that
$\tmult_{\inv a} b$ satisfies $\smult(x, a) \le b$. What the theorem tells us is that they are \emph{the largest solutions} to $\smult(a,x) \le b$ and $\smult(x, a) \le b$, respectively. In other words, the theorem saves us the effort of making an argument for the optimality of the solutions.

Theorem \ref{9721w91hja} also suggests the following observation. For a given $a \in P$, we have adjoint pairs $(\mu_a, \tau^{\gamma a})$ and $(\mu^a, \tau_{\gamma a})$. As we noticed, this means we can find the largest $x$ such that $\mu(a, x) \le b$ or $\mu(x,a) \le b$. But it also means that we can find \emph{the smallest} $x$ such that $b \le \tau(a, x)$ or $b \le \tau(x, a)$. This is because, $\mu_{\gamma a}$ is the left adjoint of $\tau^{a}$, and $\mu^{\gamma a}$ is the left adjoint of $\tau_{a}$. 
For all examples we will discuss, source multiplication plays the role of the usual composition operation of the theory. But \structName{s} make it clear that $\mu$ and $\tau$ are closely related operations. In fact, \structName{s} generalize De Morgan's identities (see section \ref{sc:targetMult}). Thus, while inequalities of the form $\mu(a, x) \le b$ are more common in the literature, \structName{s} indicate that we can also solve inequalities of the form $b \le \tau(a, x)$. As we will see, for some theories there is clear understanding of how target multiplication can be used, but for others its use is unknown.

\paragraph{Example.} In the case of a Boolean lattice $B$, what is the quotient? We showed in previous examples that $B$ is a \structName, and we identified its target multiplication. For $a, b \in B$, we can write an expression of the form $\mu(a, x) \le b$. By Theorem~\ref{9721w91hja}, we know the largest $x$ that satisfies this expression is $\tau^{\gamma a} b = \tau(b, \neg a) = b \lor \neg a$, i.e., the quotient is the implication $a \to b$.

\subsection{\StructName{s} with identity}

In the definition of a \structName, we did not assume that source multiplication has an identity. Here we consider briefly what happens when it does. Multiplicative identities are common, and in fact, there exists a multiplicative identity in all compositional theories we know.

Suppose $P$ is a \structName and $e \in P$ is a left identity for source
multiplication, i.e., $\smultL{e} \simeq \id$. By Theorem \ref{9721w91hja},
$(\id, \tmultR{\inv e})$ is an adjoint pair. The right adjoint of $\id$ is
$\id$. Since adjoints are unique up to isomorphism, $\tmultR{\inv e} \simeq
\id$. This means that $\inv e$ is a right identity element for $\tmult$.
Moreover, in view of \eqref{xhjdw98dsqn}, $\tmultL{\inv e} \simeq \id$. By
Theorem \ref{9721w91hja}, $(\smultR{e}, \id)$ is an adjoint pair. By the same
reasoning just followed, we must have $\smultR{e} \simeq \id$. We record this
result:

\begin{corollary}\label{kjhdpoqlkjndo}
    Let $(P, \le, \smult, \inv)$ be a \structName. If $e \in P$ is a left (or right) identity for source multiplication, it is a double-sided identity for source multiplication, and $\inv e$ is a double-sided identity for target multiplication. 
    Analogously, if $e \in P$ is a left (or right) identity for target multiplication, it is a double-sided identity for target multiplication, and $\inv e$ is a double-sided identity for source multiplication. 
\end{corollary}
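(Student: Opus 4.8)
\StructName{s} furnish exactly one tool for this statement, Theorem~\ref{9721w91hja}, and the plan is to run everything through it. For each $a\in P$ it supplies the two adjoint pairs $(\smultL{a},\tmultR{\inv a})$ and $(\smultR{a},\tmultL{\inv a})$, and the driving observation is that $\id$ is its own adjoint on both sides, so whenever one member of an adjoint pair is $\simeq\id$ its partner is forced to be $\simeq\id$ as well. First I would record the uniqueness fact in the weak form we need: if $L$ and $L'$ are both left adjoints of the same $R$, then from $Lb\le c \Leftrightarrow b\le Rc \Leftrightarrow L'b\le c$, instantiating $c=Lb$ and $c=L'b$ gives $Lb\simeq L'b$ for all $b$, i.e. $L\simeq L'$; the statement for right adjoints is identical. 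Throughout, ``identity'' and ``$\simeq\id$'' are read up to $\simeq$, since $(P,\le)$ is only a preorder.

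Starting from the hypothesis $\smultL{e}\simeq\id$, I would chain four deductions. Taking $a=e$ in the first pair, $(\smultL{e},\tmultR{\inv e})$ is adjoint; since $\smultL{e}\simeq\id$ and $\id$ is self-adjoint, uniqueness of right adjoints yields $\tmultR{\inv e}\simeq\id$, so $\inv e$ is a right identity for $\tmult$. Inverting the first identity in~\eqref{xhjdw98dsqn} to $\tmultL{\inv e}=\inv\circ\smultL{e}\circ\inv$ and applying A1 gives $\tmultL{\inv e}\simeq\inv\circ\inv=\id$, so $\inv e$ is also a left identity for $\tmult$; hence $\inv e$ is two-sided for $\tmult$. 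Feeding $\tmultL{\inv e}\simeq\id$ into the second pair $(\smultR{e},\tmultL{\inv e})$ and using uniqueness of left adjoints gives $\smultR{e}\simeq\id$, so $e$ is two-sided for $\smult$. This closes the left-identity case, and the right-identity case is the mirror image, obtained by interchanging $\smultL{\cdot}\leftrightarrow\smultR{\cdot}$ and $\tmultL{\cdot}\leftrightarrow\tmultR{\cdot}$ throughout.

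For the analogous claim beginning from an identity for target multiplication I would reuse the same two adjoint pairs, but read in the opposite direction: since $\smultL{\inv a}$ is the left adjoint of $\tmultR{a}$ and $\smultR{\inv a}$ is the left adjoint of $\tmultL{a}$ (Theorem~\ref{9721w91hja} with $a$ replaced by $\inv a$), an assumption such as $\tmultL{e}\simeq\id$ forces its left adjoint $\smultR{\inv e}\simeq\id$, and the rest proceeds exactly as before with $\smult$ and $\tmult$ swapped. The one point I would flag, as a trap rather than a difficulty, is the temptation to argue by self-duality: $(P,\le,\tmult,\inv)$ is \emph{not} itself a \structName. Indeed, the units of the two adjunctions of Theorem~\ref{9721w91hja} give $\id\le\tmultL{a}\circ\inv\circ\tmultR{a}\circ\inv$ and $\id\le\tmultR{a}\circ\inv\circ\tmultL{a}\circ\inv$, so $\tmult$ satisfies the \propName inequalities with $\le$ \emph{reversed}. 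The target case must therefore be run through the existing adjoint pairs under the substitution $a\mapsto\inv e$, as above, not by applying the first half of the corollary to a putative dual heap. Beyond this, no step is genuinely hard; the real work is bookkeeping the left/right and source/target labels and tracking the argument $\inv e$ versus $e$ through each application of uniqueness.
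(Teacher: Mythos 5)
Your proposal is correct and follows essentially the same route as the paper: the paper's own argument (given in the paragraph preceding the corollary) likewise feeds $\smultL{e}\simeq\id$ into the adjoint pair $(\smultL{e},\tmultR{\inv e})$ from Theorem~\ref{9721w91hja}, invokes uniqueness of adjoints, uses the identities~\eqref{xhjdw98dsqn} to get $\tmultL{\inv e}\simeq\id$, and then applies the second adjoint pair to conclude $\smultR{e}\simeq\id$. You merely make explicit two things the paper leaves implicit — the proof of uniqueness of adjoints up to $\simeq$ and the ``analogously'' half for target multiplication — and your warning that $(P,\le,\tmult,\inv)$ is not itself a \structName{} is a correct and worthwhile observation.
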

    
\paragraph{Example.} Let $B$ be a Boolean lattice. The top element of the
lattice, usually denoted 1, is an identity for source multiplication: $1 \land a
= a$ for all $a \in B$. The previous corollary tells us that $\neg 1 = 0$ is a
double sided identity for target multiplication, which we identified to be
disjunction.

\section{Additional instances of \structName{s}}
\label{sec:preordHeapInstances}

As described in Section \ref{sec:preordHeaps}, as soon as we verify that a theory is a \structName, we know how to compute quotients for that theory. Here we show that assume-guarantee (AG) contracts and interface automata are \structName{s}. In both cases, we first define the algebraic aspects of the theory, and then we proceed to show that it is a \structName, which involves verifying the axioms of Definition \ref{df:pgroup}. After we do this, we invoke Theorem \ref{9721w91hja} to express its quotient in closed-form. The literature for both theories is large, and we only discuss them algebraically. To learn about their uses and the design methodologies based on them, we suggest \cite{BenvenisteContractBook} and \cite{deAlfaro2001}.

\subsection{AG contracts}

Assume-guarantee contracts are an algebra and a methodology to support compositional system design and analysis. Fix once and for all a set $B$ whose elements we call behaviors. Subsets of $B$ are referred to as behavioral properties or trace properties. An AG contract is a pair of properties $C = (A, G)$ satisfying $A \setunion G = B$. Contracts are used as specifications: a component adheres to contract $C$ if it meets the guarantees $G$ when instantiated in an environment that satisfies the assumptions $A$. The specific form of these properties is not our concern now; we are only interested in the algebraic definitions. The algebra of assume-guarantee contracts was introduced by R. Negulescu \cite{negu-concur00} (there called \emph{process spaces}) to deal with assume-guarantee reasoning for concurrent programs. The algebra was reintroduced, together with a methodology for system design, by Benveniste et al.~\cite{multViewpoint} to apply assume-guarantee reasoning to the design and analysis of any engineered system. Now we describe the operations of this algebra.

For $C' = (A', G')$ another contract, the partial order of AG contracts, called \emph{refinement}, is given by $C \le C'$ when $G \subseteq G'$ and $A \supseteq A'$. The involution of AG contracts, called reciprocal, is given by $\inv C = (G, A)$. This operation is clearly antitone and meets axiom A1. Source multiplication is contract composition: $\smult(C, C') = \left(A \setint A' \setunion \neg (G \setint G'), G \setint G'\right)$. This operation yields the tightest contract obeyed by the composition of two design elements, each obeying contracts $C$ and $C'$, respectively. Composition is monotonic in the refinement order of AG contracts. We need to verify the \propName conditions. Since source multiplication for AG contracts is commutative, we verify \eqref{eq:commReg}:
\begin{align*}
    (\smultL{C} &\circ \inv)^2 C' =
    (\smultL{C} \circ \inv) \circ (\smultL{C}) (G', A') =
    \smultL{C} (G \setint A', A \setint G' \setunion \neg (G \setint A')) \\
    & = (A \setint G \setint A' \setunion \neg G \setunion \neg (A \setint G' \setunion \neg A'),
    G \setint (A \setint G' \setunion \neg A'))
    \\ 
    & = (A \setint A' \setunion \neg G \setunion \neg A \setint A' \setunion \neg G' \setint A',
    G \setint (A \setint G' \setunion \neg A'))
    \\ &=
    (A' \setunion \neg G, G \setint (A \setint G' \setunion \neg A')) \le (A', G') = C',
\end{align*}
where in the last step we used the fact that $\neg A' \subseteq G'$, which follows from $A' \setunion G' = B$. We conclude that AG contracts satisfy the \propName conditions, and thus have \structName structure.

What is target multiplication for AG contracts? From its definition, we have
$\tau(C, C') = \gamma \circ \mu \circ (\gamma C, \gamma C') = \gamma \circ \mu \left( (G, A), (G', A') \right) =
\left( A \setint A', G\setint G' \setunion \neg (A\setint A') \right)$. This is an operation on contracts called \emph{merging}. One of the main objectives of the theory of assume-guarantee contracts is to deal with \emph{multiple viewpoints}, i.e., a multiplicity of design concerns, each having a contract representing the specification for that concern (e.g., functionality, timing, etc.). In \cite{contractMerging}, it is argued that the operation of merging is used to bring multiple viewpoint specifications into a single contract object. 

Since AG contracts are \structName{s}, we get their quotient formulas from Theorem \ref{9721w91hja}. The adjoint of $\smultL{C'}$ is $\tau^{\gamma C'} = \inv \circ \smultR{C'} \circ \inv$. Applying this to $C$ yields $\tau^{\gamma C'} (C) = \inv \circ \smultR{C'} (G, A) =
(A \setint G', G \setint A' \setunion \neg (A \setint G'))$. This closed-form expression for the quotient of AG contracts was first reported in \cite{agquotient}. Also by Theorem \ref{9721w91hja}, the left adjoint of merging by a fixed contract $C'$ is the operation $\mu(C, \gamma C ') = \mu\left( (A, G), (G', A') \right) = \left(A \setint G' \setunion \neg(G \setint A'), G \setint A'\right)$. This operation was recently introduced under the name of \emph{separation} in \cite{contractMerging}.

\subsection{Interface automata}
\label{sec:interface-automata}

We show that Interface Automata as introduced in~\cite{deAlfaro2001} have \structName structure. To achieve this result, we first provide the relevant definitions for interface automata. All definitions match those of ~\cite{deAlfaro2001}, except for our definition of alternating simulation for interface automata.

\newcommand{\actset}[0]{\mathcal{A}}
\newcommand{\stepset}[0]{\mathcal{T}}

An interface automaton $P = \langle V_P, V_P^{\text{init}}, \actset_P^I, \actset_P^O, \actset_P^H, \stepset_P \rangle$ consists of the following elements:
\begin{compactitem}
\item $V_P$ is a set of states.
\item $V_P^{\text{init}} \subseteq V_P$ is a set of initial states. Following~\cite{deAlfaro2001}, we require that $V_P^{\text{init}}$ contains at most one state. 
\item $\actset_P^I, \actset_P^O$, and $\actset_P^H$ are mutually disjoint sets of input, output, and internal actions. We denote by $\actset_P = \actset_P^I \setunion \actset_P^O \setunion \actset_P^H$ the set of all actions.
\item $\stepset_P \subseteq V_P \times \actset_P \times V_P$ is a set of steps.
\end{compactitem}

Following~\cite{deAlfaro2001}, if $a \in \actset_P^I$ (resp. $a \in \actset_P^O$, $a \in \actset_P^H$), then $(v,a,v')$ is called an input (resp. output, internal) step.  We denote by $\stepset_P^I$ (resp. $\stepset_P^O$, $\stepset_P^H$) the set of input (resp. output, internal) steps.
An action $a \in \actset_P$ is enabled at a state $v \in V_P$ if there is a step $(v,a,v') \in \stepset_P$ for some $v' \in V_P$. We indicate by $\actset_P^I(v), \actset_P^O(v), \actset_P^H(v)$ the subsets of input, output, and internal actions that are enabled at the state $v$, and we let $\actset_P(v) = \actset_P^I(v) \setunion \actset_P^O(v) \setunion \actset_P^H(v)$.


\begin{definition}
\label{definition:ia:product}
If $P$ and $Q$ are interface automata, let $\shared(P,Q) = (\actset_P^I \setint \actset_Q^O) \setunion (\actset_P^O \setint \actset_Q^I)$. The product $P \otimes Q$ is the interface automaton with the following constituents:
$V_{P \otimes Q} = V_P \times V_Q$, 
$V_{P \otimes Q}^{\text{init}} = V_P^{\text{init}} \times V_Q^{\text{init}}$, 
$\actset_{P \otimes Q}^I = (\actset_P^I \setunion \actset_Q^I) \setminus \shared(P,Q)$, 
$\actset_{P \otimes Q}^O = (\actset_P^O \setunion \actset_Q^O) \setminus \shared(P,Q)$, 
$\actset_{P \otimes Q}^H = \actset_P^H \setunion \actset_Q^H \setunion \shared(P,Q) \setminus (\actset_{P \otimes Q}^I \setunion \actset_{P \otimes Q}^O)$, and
\begin{align*}
\stepset_{P \otimes Q} & = \setArg{ ((v,u),a,(v',u))}{(v,a,v') \in \stepset_P \wedge a \in \actset_P \setminus \actset_Q \wedge u \in V_Q } \\
& \setunion \setArg{ ((v,u),a,(v,u'))}{ (u,a,u') \in \stepset_Q \wedge a \in \actset_Q \setminus \actset_P  \wedge v \in V_P } \\
& \setunion \setArg{ ((v,u),a,(v',u')) }{ (v,a,v') \in \stepset_P \wedge (u,a,u') \in \stepset_Q \wedge a \in \actset_P \setint \actset_Q  }.
\end{align*}

\end{definition}

We call illegal those states of the product in which one of the interface automata can take a step through a shared action, but the other can't. These states are removed from the product in the definition of composition of interface automata. Given two composable interface automata $P$ and $Q$, the set $\illegal(P,Q)$ $\subseteq$ $V_P \times V_Q$ of illegal states of $P \otimes Q$ is given by
\[
\illegal(P,Q) = \setArg{(v,u) \in V_P \times V_Q }{ \exists a \in \shared(P,Q). 
    \left( 
        \begin{array}{c}
            a \in \actset_P^O(v) \wedge a \notin \actset_Q^I(u) \\
            \vee\\
            a \in \actset_Q^O(u) \wedge a \notin \actset_P^I(v) \\
        \end{array}
    \right)
}.
\]

An environment for an interface automaton $R$ is an interface automaton $E$ such that $E$ is composable with $R$, $E$ is nonempty, $\actset_E^I = \actset_R^O$, and $\illegal(R,E) = \emptyset$. A legal environment for the pair $(P,Q)$ is an environment for $P \otimes Q$ such that no state in $\illegal(P,Q) \times V_E$ is reachable in $(P \otimes Q) \otimes E$. We say that a pair $(v,u) \in V_P \times V_Q$ of states is compatible if there is an environment $E$ for $P \otimes Q$ such that no state in $\illegal(P,Q) \times V_E$ is reachable in $(P \otimes Q) \otimes E$ from the state $\{(v,u)\} \times V_E^{\text{init}}$. Two interface automata $P$ and $Q$ are compatible if the initial state $(v,u) \in V_P^{init} \times V_Q^{init}$ is compatible. We write $\comp(P,Q)$ for the set of compatible states of $P \otimes Q$. With these notions, we can define parallel composition for interface automata.

Given two compatible interface automata $P$ and $Q$, the composition $P \parallel Q$ is an interface automaton with the same action sets as $P \otimes Q$. The states are $V_{P \parallel Q} = \comp(P,Q)$; the initial states are $V_{P \parallel Q}^{\text{init}}$ $=$ $V_{P \otimes Q}^{\text{init}} \setint \comp(P,Q)$; and the steps are $\stepset_{P \parallel Q}$ $=$ $\stepset_{P \otimes Q}$ $\setint$ $(\comp(P,Q) \times \actset_{P \parallel Q} \times \comp(P,Q))$.

Let $v \in V_P$, the set $\epsclosure_P(v)$ is the smallest set $U \subseteq V_P$ such that $v \in U$ and if $u \in U$ and $(u,a,u') \in \stepset_P^H$, then $u' \in U$. Moreover, we let
\[
\exten_P^O(v) = \bigcup_{u \in \epsclosure(v)} \actset_P^O(u)
\quad\text{and}\quad
\exten_P^I(v) = \bigcup_{u \in \epsclosure(v)} \actset_P^I(u)
\]
be the sets of externally enabled output and input actions, respectively, at $v$. And for all externally enabled input and output actions $a \in \exten_P^I(v) \setunion \exten_P^O(v)$, we let
\[
\extdest_P(v,a) = \setArg{ u' }{\exists (u,a,u') \in \stepset_P.\,\, u \in \epsclosure_P(v) }.
\]
With these notions, we can define an alternating simulation between interface automata.

\begin{definition}
\label{definition:ia:alternating-simulation}
Consider two interface automata $P$ and $Q$. A binary relation $\preceq \, \subseteq V_Q \times V_P$ is an alternating simulation from $Q$ to $P$ if for all states $u \in V_Q$ and $v \in V_P$ such that $u \preceq v$, the following conditions hold:
\\(a) $\exten_P^I(v) \subseteq \exten_Q^I(u), \quad\quad \exten_Q^O(u) \subseteq \exten_P^O(v)$.
\\(b) For all actions $a \in \exten_Q^O(u)$ and all states $u' \in \extdest_Q(u,a)$, there is a state $v' \in \extdest_P(v,a)$ such that $u' \preceq v'$ and for all actions $a \in \exten_P^I(v)$ and all states $v' \in \extdest_P(v,a)$, there is a state $u' \in \extdest_Q(u,a)$ such that $u' \preceq v'$.
\end{definition}

Now we use the notion of alternating simulation to establish a preorder for interface automata:
the interface automaton $Q$ refines the interface automaton $P$, written $Q \preceq P$, if $\actset_P^I \subseteq \actset_Q^I$, $\actset_P^O \supseteq \actset_Q^O$, and there is an alternating simulation $\preceq$ from $Q$ to $P$, a state $v \in V_P^{\text{init}}$, and a state $u \in V_Q^{\text{init}}$ such that $u \preceq v$.

Let $P = \langle V_P, V_P^{\text{init}},A_P^I,A_P^O,A_P^H,T_P \rangle$ be an interface automaton. The \mirror of $P$, denoted $P^{\transp}$, is given by $P^{\transp} = \langle V_P, V_P^{\text{init}},A_P^O,A_P^I,A_P^H,T_P \rangle$.
The \mirror operation is clearly an involution, i.e., $\left(P^{\transp}\right)^{\transp} = P$. Let the source multiplication $\mu$ be the parallel composition of interface automata, $\inv$ be the \mirror operation, and let the preorder be refinement. We state the main claim of this section:

\begin{proposition}
\label{proposition:ia:preordHeap}
A theory of interface automata is a \structName.
\end{proposition}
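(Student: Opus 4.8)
The plan is to verify each clause of Definition~\ref{df:pgroup} for the structure $(P, \preceq, \parallel, (\cdot)^\transp)$, where $P$ ranges over interface automata, $\preceq$ is refinement, $\parallel$ is parallel composition, and $(\cdot)^\transp$ is the \mirror. Axiom A1 is immediate, since the text already records $\left(P^\transp\right)^\transp = P$. That $\preceq$ is a preorder follows from the standard facts that the identity relation on states is an alternating simulation (reflexivity) and that the relational composite of two alternating simulations is again one (transitivity), together with the matching behaviour of the signature inclusions $\actset_P^I \subseteq \actset_Q^I$ and $\actset_P^O \supseteq \actset_Q^O$ under composition. Monotonicity of $\parallel$ in both arguments is the usual compositionality property of interface-automata refinement. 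The \mirror is antitone because swapping $\actset^I$ and $\actset^O$ turns the two signature inclusions of $Q \preceq P$ into those of $P^\transp \preceq Q^\transp$, while the converse of the witnessing relation is an alternating simulation from $P^\transp$ to $Q^\transp$: the input/output swap interchanges the $\exten^I$ and $\exten^O$ clauses of Definition~\ref{definition:ia:alternating-simulation}, so conditions (a) and (b) hold after mirroring exactly because they held before.

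The real content is the \propName condition. Since $P \otimes Q$ and $Q \otimes P$ are isomorphic via $\flip$, and refinement is invariant under state renaming, parallel composition is commutative up to $\simeq$; hence by Note~\ref{kjdhaoiuybdp9a} it suffices to verify the single inequality \eqref{eq:commReg}, that is, $\left(\smultL{P} \circ \inv\right)^2 Q \preceq Q$ for all $P, Q$. Unwinding the definitions, this is exactly
\[
P \parallel \left(P \parallel Q^\transp\right)^\transp \preceq Q,
\]
which asserts that composing $P$ with the candidate quotient $\left(P \parallel Q^\transp\right)^\transp$ refines $Q$ --- the interface-automata instance of the closed-form quotient $\left(A \diamond_{IA} B^{\bot}\right)^{\bot}$ recalled in the introduction. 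Proving this one refinement discharges both A2a and A2b and, through Theorem~\ref{9721w91hja}, immediately yields the quotient formula.

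To establish the displayed refinement, write $R = P \parallel \left(P \parallel Q^\transp\right)^\transp$. I would first handle the signatures: expanding $\shared(P, Q^\transp) = (\actset_P^I \setint \actset_Q^I) \setunion (\actset_P^O \setint \actset_Q^O)$ and tracking how the \mirror swaps inputs and outputs, one checks that the shared actions become internal and that the input set of $R$ contains $\actset_Q^I$ while its output set is contained in $\actset_Q^O$. Then I would exhibit an explicit alternating simulation from $R$ to $Q$: a generic state of $R$ has the form $\left(v', (v, w)\right)$ with $v, v' \in V_P$ and $w \in V_Q$, and the natural candidate relates $\left(v', (v, w)\right) \preceq w$ along the diagonal $v = v'$ of the two copies of $P$. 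One then verifies clauses (a) and (b) of Definition~\ref{definition:ia:alternating-simulation} using $\exten$, $\extdest$, and $\epsclosure$: the copy of $P$ introduced by the quotient cancels the composed $P$ on the now-internal shared actions, the \mirror converts each externally enabled output obligation of $Q$ into an input that the quotient is prepared to accept and vice versa, and the $\epsclosure$-closure accounts for the newly hidden steps.

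The main obstacle is precisely this last step: verifying the alternating simulation while simultaneously controlling compatibility and illegal states. Because $\parallel$ is defined only on compatible automata and prunes $\illegal$ and non-$\comp$ states, I must argue that the diagonal relation remains inside $\comp\!\left(P, \left(P \parallel Q^\transp\right)^\transp\right)$ and that internalizing the shared actions does not create externally enabled outputs of $R$ that $Q$ cannot match through $\extdest$. Threading the $\epsclosure$-based definitions of $\exten^I$, $\exten^O$, and $\extdest$ through the two layers of composition and the intervening \mirror --- so that input/output alternation is preserved after actions are hidden --- is the delicate bookkeeping on which the whole proposition rests.
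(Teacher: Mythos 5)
The paper itself omits the proof of this proposition (``Some proofs are omitted due to space constraints''), so there is no printed argument to compare against; your proposal has to stand on its own. Its skeleton is sound and is surely what any proof must look like: A1 is immediate, the preorder and antitonicity claims are correctly argued (the converse of an alternating simulation does witness $P^\transp \preceq Q^\transp$ because mirroring swaps the two halves of clauses (a) and (b)), monotonicity is a citable compositionality theorem, and the reduction of A2a/A2b to the single inequality $P \parallel \left(P \parallel Q^\transp\right)^\transp \preceq Q$ via commutativity up to $\simeq$ is legitimate, since Note~\ref{kjdhaoiuybdp9a} transfers to $\simeq$-commutativity because all the maps involved preserve $\simeq$.

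The genuine gap is that you never prove the one inequality that carries all of the content. Everything before it is routine; the proposition \emph{is} the claim that the diagonal relation $\left(v',(v,w)\right) \mapsto w$ (restricted to $v=v'$) is an alternating simulation into $Q$ and that the relevant states survive the pruning to $\comp\!\left(P,\left(P\parallel Q^\transp\right)^\transp\right)$. You correctly identify the candidate relation and the three places where the verification can fail --- the signature computation, clause (b) of Definition~\ref{definition:ia:alternating-simulation} across two layers of hiding, and the interaction with $\illegal$ and compatibility --- but you leave all three as announced intentions (``I would first handle\dots'', ``The main obstacle is precisely this last step''). A further point you share with the paper but should still confront: $\parallel$ is a \emph{partial} operation (defined only for composable, compatible pairs), whereas Definition~\ref{df:pgroup} requires a total $\smult\colon P\times P\to P$; without either restricting the class of automata or arguing that $P$ and $\left(P\parallel Q^\transp\right)^\transp$ are always composable and compatible, even the statement of the \propName inequality is not well-posed. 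As it stands the proposal is a correct plan, not a proof.
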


Since interface automata have \structName structure, for given interface automata $P$ and $Q$, Theorem \ref{9721w91hja} enables us to find largest solutions $R$ for equations of the form $\mu(Q, R) \le P$. The quotient for interface automata was first reported in \cite{bhaduri08}.
Now that we know interface automata have \structName structure, we can ask: what is target multiplication for interface automata? The operation is given by $\tau(P, Q) = \left(P^\transp \parallel Q^\transp\right)^\transp$. We propose to call this operation \emph{merging} in analogy to the case of AG contracts. Similarly, by Theorem \ref{9721w91hja}, merging by fixed $Q$, $\tau_{Q}$, has a left adjoint given by
$\mu^{\gamma Q} (P) = P \parallel Q^\transp$. For the same reason, we propose to call this binary operation \emph{separation}. 
In AG contracts, merging and separation are used to handle multiple viewpoints in a design. To the best of our understanding, the notion of handling multiple design viewpoints has not been discussed for interface automata. Maintaining the analogy to AG contracts, we suspect that merging and separation here defined provide interface automata the ability to handle these multiple viewpoints. Exploring this idea is material for future work.

\section{\CompStructName{s}}
\label{sec:sievedHeaps}

Some theories in computer science require manipulating objects which are not defined over the same domain. For example, consider a language $L_1$ defined over an alphabet $\Sigma_1$. Let $\Sigma_2$ be another alphabet for which $L_2$ is a language. The powerset of a set is a Boolean lattice, so we have two \structName{s} $P_{\Sigma_1} = 2^{\Sigma_1^*}$ and $P_{\Sigma_2} = 2^{\Sigma_2^*}$ whose source multiplications and involutions are intersection and negation ($*$ is the Kleene star---we will define operations carefully in the section on languages). With the theory of \structName{s}, we know how to solve inequalities for $P_{\Sigma_1}$ and for $P_{\Sigma_2}$. Suppose we define an operation that allows us to compose $L_1 \in P_{\Sigma_1}$ with $L_2 \in P_{\Sigma_2}$. How do we solve inequalities involving $L_1$ and $L_2$ then? These languages belong to different \structName{s}. It is natural to define such an operation by mapping $L_1$ and $L_2$ to a common \structName, which by definition, has its own notion of source multiplication. We need a notion of mapping between \structName{s}:

\begin{definition}
    Let $(P, \le, \smult, \inv)$ and $(P', \le', \smult', \inv')$ be two \structName{s}. A \structName homomorphism $f \colon  P \to P'$ is an order-preserving map which commutes with the source multiplications and involutions, i.e.,
    {\scriptsize
    $
        \begin{tikzcd}[sep = small]
            P \times P \arrow[r, "f \times f"] \arrow[d, "\mu"'] & P' \times P' \arrow[d, "\mu'"] \\
            P  \arrow[r, "f "] & P'
        \end{tikzcd} 
    $} and 
    {\scriptsize
    $
        \begin{tikzcd}[sep = small]
            P  \arrow[r, "f"] \arrow[d, "\inv"'] & P' \arrow[d, "\inv'"] \\
            P  \arrow[r, "f "] & P'
        \end{tikzcd}
    $} commute.
\end{definition}

\StructName{s} $P_{\Sigma_1}$ and $P_{\Sigma_2}$ are indexed by alphabets. The common \structName where $L_1$ and $L_2$ can be mapped is determined by $\Sigma_1$ and $\Sigma_2$. As we will see in the next section, one option is to say that they generate the alphabet $\Sigma_c = \Sigma_1 \setunion \Sigma_2$, and we can define maps $\iota_1 \colon P_{\Sigma_1} \to P_{\Sigma_c}$ and $\iota_2 \colon P_{\Sigma_2} \to P_{\Sigma_c}$ that embed languages over $\Sigma_1$ and $\Sigma_2$ to those defined under $\Sigma_c$. This observation tells us that we can use a structure $S$ in order to index \structName{s}; this structure must have a binary operation defined in it. This operation will fulfill the role of identifying the alphabets where two languages can meet. Call this structure $S$, and let $\cdot$ be its binary operation. If we have two languages defined over the same alphabet, we should not need to move to another alphabet to compute the source multiplication of the two languages; thus, the binary operation of $S$ should be idempotent. We will also require the operation to be commutative since it makes no difference whether we go to the language generated by $\Sigma_1$ and $\Sigma_2$ or to that generated by $\Sigma_2$ and $\Sigma_1$. A similar reasoning leads us to require associativity. Thus, $S$ is endowed with an associative, commutative, idempotent binary operation, which means it is a semilattice. We make the choice to interpret it as an upper semi-lattice because we have the intuition that the languages generated by two smaller languages should be larger than any of the two, but this interpretation does not impose any algebraic limitations: an upper semilattice can be turned into a lower semilattice simply by flipping it upside-down.

We introduce the notion of a sieved, \structName (\compStructName, for short) that allows us to move objects between different domains of definition or different levels of abstraction. A \compStructName is a collection of \structName{s} indexed by an upper semilattice $S$ together with mappings between the \structName{s}. We call these mappings concretizations. An upper semilattice can be interpreted as a partial order: for $a, b \in S$, we say that $a \le a b$. Thus, the shortest definition for a \compStructName is that it is a functor from the preorder category $S$ to {\bf{PreHeap}}, the \structName category, whose objects are \structName{s} and whose arrows are \structName homomorphisms. We will give a longer definition. But first, why the adjective sieved? A \compStructName consists of a collection of \structName{s} and maps between them. We interpret these \structName{s} as structures containing varying amounts of detail about an object. This varying granularity motivated the name. This is the definition of this composite structure:


%
%
%

\begin{definition}\label{df:sievedHeaps}
Let $\sgroup$ be a semilattice. Let $\{(P_x, \le_x, \mu_x, \inv_x)\}_{x \in \sgroup}$ be a collection of \structName{s} such that for every $x, y, z \in \sgroup$ we have a unique \structName homomorphism $\iota\colon  P_x \to P_{xy}$ referred to as a concretization and making 
{\scriptsize $
    \begin{tikzcd}[sep = small]
        P_{xy} \arrow[dr, "\iota'"] & \\ 
        P_x \arrow[u, "\iota"]
        \arrow[r, "\iota''"]& 
        P_{xyz}
    \end{tikzcd}
$}
commute. 
We require the concretization $\iota \colon  P_x \to P_x$ to be the identity. Let $P = \oplus_{x \in \sgroup} P_x$, where $\oplus$ stands for disjoint union. We call $(P, \le, \mu, \inv)$ an $\sgroup$-\compStructName, where $\smult\colon  P\times P \to P$ is an operation called source multiplication, and $\inv \colon  P \to P$ is called involution. Let $a \in P_x$ and $b \in P_y$, and let $\iota_x\colon  P_x \to P_{xy}$ and $\iota_y\colon  P_y \to P_{xy}$ be concretizations.
These operations are given by
\begin{align*}
    \mu(a, b) = \mu_{xy} \left(\iota_{x}(a), \iota_{y}(b)\right) \quad\text{and} \quad 
    \inv(a) = \inv_{x}(a).
\end{align*}
Moreover, we say that $a \le b$ if and only if there exists $z \in S$ and concretizations $\iota\colon  P_x \to P_z$ and $\iota'\colon  P_y \to P_z$ such that 
$\iota (a) \le_{z} \iota' (b)$, where $\le_{z}$ is the preorder of $P_{z}$.
\end{definition}

Target multiplication $\tau$ for $P$ is defined in a similar way: $\tau(a, b) = \tau_{xy} \left(\iota_{x}(a), \iota_{y}(b)\right)$, where $\tau_{xy}$ is the target multiplication of the \structName $P_{xy}$.

\subsection{\CompStructName{s} are \structName{s}}
\label{sec:sievedHeapsArePreordered}

Now we show that a \compStructName is itself a \structName. To do this, we must show that the relation $\le$ over \compStructName{s} is a preorder, that source multiplication defined for a \compStructName is monotonic, that its involution is antitone, and that it meets the \propName conditions. The following statements show that \compStructName{s} have these properties.

\begin{lemma}\label{lm:kqoasxnh}
    The relation $\le$ on an $S$-\compStructName $P$ is a preorder.
\end{lemma}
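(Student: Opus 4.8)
The plan is to verify the two defining properties of a preorder—reflexivity and transitivity—directly from Definition~\ref{df:sievedHeaps}. Throughout I will use two facts about concretizations: first, each concretization is a \structName homomorphism and is therefore order-preserving; second, concretizations compose functorially, so that for any $u \le v \le w$ in $S$ the composite of the concretizations $P_u \to P_v \to P_w$ equals the unique concretization $P_u \to P_w$. This composition law is exactly the content of the commuting-triangle axiom together with the uniqueness of concretizations (equivalently, it is the statement that a \compStructName is a functor $S \to \mathbf{PreHeap}$).

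Reflexivity is immediate: for $a \in P_x$, take $z = x$ and let both concretizations be the identity $P_x \to P_x$, which is guaranteed by the definition. Then reflexivity of $\le_x$ gives $a \le_x a$, which by definition means $a \le a$.

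For transitivity, suppose $a \in P_x$, $b \in P_y$, $c \in P_w$ with $a \le b$ and $b \le c$. Unfolding the definition, $a \le b$ yields $z_1 \in S$ and concretizations $\alpha\colon P_x \to P_{z_1}$, $\beta\colon P_y \to P_{z_1}$ with $\alpha(a) \le_{z_1} \beta(b)$, while $b \le c$ yields $z_2 \in S$ and concretizations $\gamma\colon P_y \to P_{z_2}$, $\delta\colon P_w \to P_{z_2}$ with $\gamma(b) \le_{z_2} \delta(c)$. I will take the common target to be $P_{z_1 z_2}$, using the concretizations $\phi\colon P_{z_1} \to P_{z_1 z_2}$ and $\psi\colon P_{z_2} \to P_{z_1 z_2}$, which exist because $S$ is a semilattice and $z_1, z_2 \le z_1 z_2$. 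Applying the order-preserving maps $\phi$ and $\psi$ to the two inequalities above gives
\[
\phi(\alpha(a)) \le_{z_1 z_2} \phi(\beta(b)) \quad\text{and}\quad \psi(\gamma(b)) \le_{z_1 z_2} \psi(\delta(c)).
\]

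The crux—and the step I expect to be the main obstacle—is showing that the two middle terms coincide, i.e.\ $\phi(\beta(b)) = \psi(\gamma(b))$. Both $\phi \circ \beta$ and $\psi \circ \gamma$ are maps $P_y \to P_{z_1 z_2}$ built by composing concretizations, so by the functorial composition law each equals the unique concretization $P_y \to P_{z_1 z_2}$; hence they are equal and agree on $b$. Chaining the two displayed inequalities through this common value and invoking transitivity of the preorder $\le_{z_1 z_2}$ yields $\phi(\alpha(a)) \le_{z_1 z_2} \psi(\delta(c))$. Since $\phi \circ \alpha$ and $\psi \circ \delta$ are concretizations $P_x \to P_{z_1 z_2}$ and $P_w \to P_{z_1 z_2}$, respectively, this is precisely the witness required by Definition~\ref{df:sievedHeaps} for $a \le c$, completing transitivity. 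The only points demanding care are that a common target receiving concretizations from both witnesses exists (handled by the semilattice join) and that the two liftings of $b$ are forced to coincide (handled by uniqueness and functoriality); beyond these, the argument is routine.
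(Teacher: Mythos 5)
Your proof is correct and follows essentially the same route as the paper's: reflexivity via the identity concretization, and transitivity by pushing both witnessing inequalities into the join $P_{z_1 z_2}$, using order-preservation of concretizations and their uniqueness (via the commuting-triangle axiom) to identify the two images of $b$. The paper's argument is the same, just with different labels for the intermediate indices.
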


\begin{proof}
    Reflexivity. Let $a \in P_x$. Let $\iota$ be the concretization $\iota\colon P_x \to P_x$. Then $\iota a \le_x \iota a$ because $\le_x$ is a preorder in $P_x$; this means that $a \le a$ in $P$.
    
    Transitivity. Let $b \in P_y$ and $c \in P_z$ and suppose that $a \le b$ and $b \le c$. Then there exist $v, w \in S$ such that $\iota_x a \le_v \iota_y b$ and $\iota_y' b \le_w \iota_z c$, where
    the diagram
    {\scriptsize $
    \begin{tikzcd}[sep = small]
            &P_{v} \arrow[r, "\iota_{v}"] & P_{vw}
            &P_{w} \arrow[l, "\iota_{w}"']&
            \\
            &
            P_x \arrow[u, "\iota_x"]
            & P_{y} \arrow[ul, "\iota_y"]
            \arrow[ur, "\iota_y'"']
            &
            P_z \arrow[u, "\iota_{z}"']
            & 
    \end{tikzcd}
    $}
    shows the relevant concretization maps (these diagrams commute per Definition \ref{df:sievedHeaps}). We obtain immediately 
    $\iota_v \circ \iota_x a \le_{vw} \iota_v \circ \iota_y b$ and $\iota_w \circ \iota_y' b \le_{vw} \iota_w \circ \iota_z c$. From the diagram, $\iota_v \circ \iota_y = \iota_w \circ \iota_y'$, which means that
    $\iota_v \circ \iota_x a \le_{vw} \iota_w \circ \iota_z c$, which means that $a \le c$.
    \end{proof}

\begin{lemma}\label{lm:kdybqoduny}
    Source multiplication on $P$ is monotonic in both arguments.     
\end{lemma}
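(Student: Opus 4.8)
The plan is to prove monotonicity in the first argument (the second being symmetric), i.e.\ that $a \le a'$ implies $\smult(a,b) \le \smult(a',b)$. Write $a \in P_x$, $a' \in P_{x'}$, and $b \in P_y$. By the definition of $\le$ on $P$, the hypothesis $a \le a'$ supplies an index $z \in S$ together with concretizations $\iota\colon P_x \to P_z$ and $\iota'\colon P_{x'} \to P_z$ satisfying $\iota(a) \le_z \iota'(a')$; in particular $x \le z$ and $x' \le z$ in the semilattice. Since $\smult(a,b)$ lives in $P_{xy}$ and $\smult(a',b)$ lives in $P_{x'y}$, to compare them I must first transport both into a common heap.

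First I would fix the common refinement $w = zy$. From $x \le z$ we get $xy \le w$, from $x' \le z$ we get $x'y \le w$, and trivially $z \le w$ and $y \le w$; hence the concretizations $\kappa_1\colon P_{xy}\to P_w$, $\kappa_2\colon P_{x'y}\to P_w$, $\kappa\colon P_z \to P_w$, and $\lambda\colon P_y \to P_w$ all exist and are unique. The goal becomes to show $\kappa_1\big(\smult(a,b)\big) \le_w \kappa_2\big(\smult(a',b)\big)$, which by the definition of $\le$ on $P$ immediately yields $\smult(a,b) \le \smult(a',b)$.

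Next I would expand both sides inside $P_w$ using that concretizations are \structName homomorphisms, hence commute with source multiplication. Writing $\iota_x\colon P_x \to P_{xy}$, $\iota_y\colon P_y \to P_{xy}$, $\iota_{x'}\colon P_{x'}\to P_{x'y}$, $\iota_y'\colon P_y \to P_{x'y}$ for the concretizations appearing in the definition of $\smult$, the homomorphism property gives
\[
\kappa_1\big(\smult(a,b)\big) = \smult_{w}\big((\kappa_1\circ\iota_x)(a),\,(\kappa_1\circ\iota_y)(b)\big), \qquad
\kappa_2\big(\smult(a',b)\big) = \smult_{w}\big((\kappa_2\circ\iota_{x'})(a'),\,(\kappa_2\circ\iota_y')(b)\big).
\]
Now uniqueness of concretizations (and the commuting triangles of Definition~\ref{df:sievedHeaps}) collapses the composites: $\kappa_1\circ\iota_x$ and $\kappa\circ\iota$ are both concretizations $P_x \to P_w$, hence equal; likewise $\kappa_2\circ\iota_{x'} = \kappa\circ\iota'$; and $\kappa_1\circ\iota_y$ and $\kappa_2\circ\iota_y'$ are both the unique concretization $\lambda\colon P_y\to P_w$, hence equal. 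Applying the order-preserving map $\kappa$ to $\iota(a)\le_z\iota'(a')$ gives $(\kappa\circ\iota)(a) \le_w (\kappa\circ\iota')(a')$, so the first arguments are ordered, while the second arguments are literally the common value $\lambda(b)$. Monotonicity of $\smult_{w}$ within the \structName $P_w$ then delivers the inequality in $P_w$, and exchanging the roles of the two factors handles the second argument.

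The main obstacle I anticipate is purely organizational: tracking the several distinct concretizations (different maps, with different domains and codomains) and invoking uniqueness together with the commuting-triangle axiom at exactly the right places to recognize each relevant composite as a single concretization into $P_w$. Once the common refinement $w=zy$ is fixed and all maps into $P_w$ are identified with the unique concretizations, the conclusion is immediate from order-preservation of $\kappa$ and monotonicity of $\smult_{w}$; no genuine estimate is required.
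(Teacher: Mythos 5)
Your proof is correct and follows essentially the same route as the paper's: both arguments transport everything into the common heap $P_{zy}$ (the paper's $P_{uy}$), use the fact that concretizations are order-preserving homomorphisms commuting with source multiplication, identify the various composite concretizations via uniqueness and the commuting triangles of Definition~\ref{df:sievedHeaps}, and conclude from monotonicity of $\smult_{zy}$ in that single heap. The only difference is presentational: you invoke uniqueness of concretizations directly where the paper reads the same identities off an explicit commuting diagram.
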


\begin{proof}
    Let $a, b, c \in P$ with $a \le c$. Suppose that $a \in P_x$, $b \in P_y$, and $c \in P_z$. Since $a \le c$, there exist $u \in S$ such that $\iota_x a \le_u \iota_z c$ for concretizations $\iota_x\colon P_x \to P_u$ and $\iota_z\colon P_z \to P_u$. Note that this means there exist $u', u'' \in S$ such that $u = x u'$ and $u = z u''$. But this implies that $u y = x y u'$ and $u y = y z u''$. Thus, there exist concretizations $\iota_{xy}\colon P_{xy} \to P_{uy}$ and $\iota_{yz}\colon P_{yz} \to P_{u y}$, and
    \begin{equation}\label{eq:skhgsiu} {\scriptsize 
\begin{tikzcd}[sep = small]
        & & P_{y} \arrow[dl, "\iota_y'"'] \arrow[d, "\iota_y"] \arrow[dr, "\iota_y''"]&& \\
        &P_{xy} \arrow[r, "\iota_{xy}"'] & P_{uy}
        &P_{yz} \arrow[l, "\iota_{yz}"]&
        \\
        &
        P_x \arrow[u, "\iota_x'"] \arrow[r, "\iota_x"]
        & P_{u} \arrow[u, "\iota_u"']
        &
        P_z \arrow[u, "\iota_{z}'"'] \arrow[l, "\iota_z"']
        &
\end{tikzcd}}
\end{equation}
commutes. Since $a \le c$, we have
    \begin{align}\label{eq:jdiuq}
    \mu_{uy}\left(\iota_u \circ \iota_x a, \iota_y b\right) \le_{uy}
    \mu_{uy}\left(\iota_u \circ \iota_z c, \iota_y b \right).
    \end{align}

By the commutativity of the diagram, $\iota_y = \iota_{xy}\circ \iota_y' = \iota_{yz}\circ \iota_y''$ and $\iota_u \circ \iota_x = \iota_{xy} \circ \iota_x'$ and $\iota_u \circ \iota_z = \iota_{yz} \circ \iota_z'$. Using these identities, we can rewrite \eqref{eq:jdiuq} as
\begin{align*}
    \mu_{uy}\left(\iota_{xy} \circ \iota_x' a, \iota_{xy}\circ \iota_y' b\right) &\le_{uy}
    \mu_{uy}\left(\iota_{yz} \circ \iota_z' c, \iota_{yz}\circ \iota_y'' b \right), \quad \text{which implies that} \\
    \iota_{xy} \circ \mu_{xy}\left( \iota_x' a, \iota_y' b\right) &\le_{uy}
    \iota_{yz} \circ \mu_{yz}\left(\iota_z' c, \iota_y'' b \right) \text{ and thus }
    \iota_{xy} \circ \mu\left( a, b\right) \le_{uy}
    \iota_{yz} \circ \mu\left(c, b \right)
    .
\end{align*}
This shows that $\mu\left( a, b\right) \le \mu\left(c, b \right)$. Monotonicity in the second argument is proved in the same way.
\end{proof}

\begin{theorem}\label{kiybd9qybo}
    An $S$-\compStructName $P$ is a \structName.
\end{theorem}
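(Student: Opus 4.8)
The plan is to establish that the $S$-sieved heap $P$ satisfies all four requirements of Definition~\ref{df:pgroup}: $(P, \le)$ is a preorder, source multiplication $\smult$ is monotonic in both arguments, the involution $\inv$ is antitone, and the two \propName conditions A2a and A2b hold. The first two are already done: Lemma~\ref{lm:kqoasxnh} gives the preorder property and Lemma~\ref{lm:kdybqoduny} gives monotonicity of $\smult$. So the remaining work is to verify antitonicity of $\inv$ and axioms A1, A2a, and A2b for the global structure, reducing each claim about $P$ to the corresponding property already known to hold in the component \structName{s} $P_x$.

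For antitonicity of $\inv$, I would take $a \in P_x$ and $b \in P_y$ with $a \le b$, so there is some $z \in S$ and concretizations $\iota\colon P_x \to P_z$, $\iota'\colon P_y \to P_z$ with $\iota(a) \le_z \iota'(b)$. The key observation is that the concretizations commute with the involutions (they are \structName homomorphisms) and that $\inv(a) = \inv_x(a)$ lives in $P_x$ again. Applying the antitone $\inv_z$ to $\iota(a) \le_z \iota'(b)$ flips the order to $\inv_z \iota'(b) \le_z \inv_z \iota(a)$; then using $\inv_z \circ \iota = \iota \circ \inv_x$ and $\inv_z \circ \iota' = \iota' \circ \inv_y$ (homomorphism property) gives $\iota'(\inv_y b) \le_z \iota(\inv_x a)$, which by definition of $\le$ on $P$ means $\inv(b) \le \inv(a)$. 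Axiom A1 is immediate pointwise: $\inv^2(a) = \inv_x(\inv_x a) = a$ since each $P_x$ satisfies A1.

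For the \propName conditions, the natural strategy is to reduce the global inequality to a single component \structName and invoke its axiom. Take $a \in P_x$ and $b \in P_y$; I want to show $\bigl(\smultL{a} \circ \inv \circ \smultR{a} \circ \inv\bigr)(b) \le b$. The difficulty is bookkeeping: each application of $\smult$ or $\inv$ may land in a different indexed heap, so I must track which alphabet (semilattice element) each intermediate result lives in and set up concretizations carrying everything into a common $P_w$. The cleanest route is to push $a$ and $b$ into $P_{xy}$ via the concretizations $\iota_x, \iota_y$, observe that by the homomorphism property each global operation agrees with the corresponding operation $\smult_{xy}, \inv_{xy}$ after concretization, and then apply left \propName (A2a) as it holds in $P_{xy}$. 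Concretely, I would show $\iota_{\ast}\bigl((\smultL{a}\circ\inv\circ\smultR{a}\circ\inv)(b)\bigr) \le_{w} \iota_{\ast}(b)$ in some $P_w$ by rewriting the left side, using that $\iota$ commutes with both $\smult$ and $\inv$, as $(\smultL{\iota a}\circ \inv \circ \smultR{\iota a}\circ \inv)(\iota b)$ and then invoking $\smultL{\iota a}\circ\inv\circ\smultR{\iota a}\circ\inv \le_w \id$, which is A2a in $P_w$. Right \propName A2b follows identically with the roles of $\smultL{}$ and $\smultR{}$ exchanged.

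The main obstacle is the index-tracking in the \propName verification. The global operations are defined by first concretizing both arguments into the join of their indices, but a composite like $\smultL{a}\circ\inv\circ\smultR{a}\circ\inv$ alternates unary involutions (which stay in the same index) with binary multiplications by the fixed $a \in P_x$ (each of which may join with $x$). The care required is to confirm that all the intermediate indices collapse—because the semilattice operation is idempotent, repeatedly joining with $x$ does not keep enlarging the index—so that a single common $P_w$ with $w = xy$ (or a suitable join) suffices, and that the commuting-triangle coherence of Definition~\ref{df:sievedHeaps} guarantees the concretizations compose consistently. Once that coherence is in hand, every step is a transcription of a component-level axiom, and the four requirements of a \structName are met, establishing the theorem.
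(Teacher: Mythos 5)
Your proposal is correct and follows essentially the same route as the paper: it invokes Lemmas~\ref{lm:kqoasxnh} and~\ref{lm:kdybqoduny} for the preorder and monotonicity, notes that antitonicity and A1 are immediate from the componentwise definition of $\inv$, and verifies the \propName conditions by concretizing into $P_{xy}$, using the homomorphism property to identify the global operations with $\smult_{xy}$ and $\inv_{xy}$, and applying the component-level axiom A2a. Your explicit remark that idempotence of the semilattice operation keeps all intermediate indices collapsed to $xy$ is exactly the bookkeeping the paper handles implicitly.
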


\begin{proof}
    By lemma \ref{lm:kqoasxnh}, we know that $(P, \le)$ is a preorder. By lemma \ref{lm:kdybqoduny}, we know that source multiplication for $P$ is monotonic. From the definition of involution $\inv$ for $P$, it is immediate that this operation is antitone and that $\inv^2 = \id$. We must show the \propName conditions. Let $a \in P_x$ and $b \in P_y$. Using the notation of \eqref{eq:skhgsiu}, we have
    $\mu(a, \inv \circ \mu(\inv b, a)) = 
    \mu(a, \inv \circ \mu_{xy}(\iota_y' \circ \inv b, \iota_x' a)) =
    \mu_{xy}(\iota_x' a, \inv \circ \mu_{xy}(\inv \circ \iota_y'  b, \iota_x' a)) \le \iota_y' b
    $, where we used the left \propName of the \structName $P_{xy}$. But this means that 
    $\mu(a, \inv \circ \mu(\inv b, a)) \le b$. We conclude that 
    $P$ meets the left \propName condition. Applying the same procedure tells us that $P$ also has right \propName. Thus, $P$ is a \structName.
\end{proof}

Now that we know that \compStructName{s} are \structName{s}, we can compute quotients in these structures. We will now consider the solution of inequalities over languages as an application of \compStructName{s}.

\section{\CompStructName{s} and language inequalities}
\label{sec:sieved-langeq}

Language inequalities arise as the formalization of the problem of synthesizing
an unknown component in hardware and software systems. In this section, we
provide preliminaries on languages and discuss their properties and operations.
A fuller treatment of language properties can be found in~\cite{fsmeq1-iwls2004,
villa12}. Our objective is to show that commonly studied language structures are
\compStructName{s}, which allows us to axiomatically find their quotients per
the results of Section \ref{sec:sievedHeaps}.

\subsection{Operations on languages}
\label{subsec:eqlang:defn}

An alphabet is a finite set of symbols. The set of all finite strings over 
a fixed alphabet $X$ is denoted by $X^{\star}$. $X^{\star}$ includes the empty 
string $\epsilon$. A subset $L \subseteq X^{\star}$ is called a {\bf language} 
over alphabet $X$. \cite{ullman-automata2} is a standard reference on this subject.

A {\bf substitution} $f$ is a mapping of an alphabet $\Sigma$ to subsets of 
$\Delta^{\star}$ for some alphabet $\Delta$. The substitution $f$ is extended 
to strings by setting $f(\epsilon) = \{\epsilon\}$ and $f(xa) = f(x)f(a)$.
The following are well-studied language operations.
\begin{itemize}[leftmargin=*,noitemsep]
\item
Given a language $L$ over alphabet $X$ and an alphabet $V$,
consider the substitution $l\colon X \rightarrow 2^{(X \times V)^{\star}}$ defined as
$
l(x) = \setArg{(x,v) }{ v \in V }.
$
Then the language
$
L_{\uparrow V} = \setunion_{\alpha \in L} l(\alpha)
$
over alphabet $X \times V$ is the {\bf lifting} of language $L$ 
to alphabet $V$.
\item
Given a language $L$ over alphabet $X$ and an alphabet $V$,
consider the mapping $e\colon X \rightarrow 2^{(X \setunion V)^{\star}}$ defined as
$
e(x) = \setArg{\alpha x \beta }{ \alpha, \beta \in (V \setminus X)^{\star} }.
$
Then the language
$
L_{\Uparrow V} = \setunion_{\alpha \in L} e(\alpha)
$
over alphabet $X \setunion V$ is the {\bf expansion} of language $L$ to
alphabet $V$, i.e., words in 
$L_{\Uparrow V}$ are obtained from those in $L$ by inserting 
anywhere in them words from $(V \setminus X)^{\star}$.
Notice that $e$ is not a substitution and that 
$e(\epsilon) = \setArg{\alpha }{ \alpha \in V^{\star} }$.
\end{itemize}
The following proposition states that language liftings and expansions meet the
properties of concretization maps of a \compStructName. These results will be
used in the next section dealing with inequalities over languages.


\begin{proposition}
\label{prop:eqlang:defop:structure-preserving}
Liftings and expansions are order-preserving and commute with intersection and complementation.
\end{proposition}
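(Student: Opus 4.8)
The plan is to recognize that, despite their different-looking definitions, both lifting and expansion are instances of a single construction: the \emph{inverse image} (preimage) of a language under a monoid homomorphism between free monoids. Once each operation is presented in this form, all three asserted properties follow at once from the elementary fact that for any total map $f\colon Z \to Y$ the preimage operator $f^{-1}\colon 2^{Y} \to 2^{Z}$ is monotone and preserves finite intersections and complements, i.e. $f^{-1}(A \setint B) = f^{-1}(A) \setint f^{-1}(B)$ and $f^{-1}(\complement{A}) = \complement{f^{-1}(A)}$ (the complement identity holding precisely because $f$ is total, so every point of $Z$ lands in exactly one of $A$, $\complement{A}$). Since for the language heaps in play source multiplication is intersection and the involution is complementation, establishing these preimage identities simultaneously shows that liftings and expansions are order-preserving and are \structName homomorphisms. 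So the whole argument reduces to two identification steps.

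First I would treat lifting. For $\alpha = x_1 \cdots x_n \in X^\star$, unwinding the substitution according to $l(\epsilon)=\{\epsilon\}$ and $l(xa)=l(x)l(a)$ gives $l(\alpha) = l(x_1)\cdots l(x_n) = \setArg{(x_1,v_1)\cdots(x_n,v_n)}{v_i \in V}$, so a word $w \in (X \times V)^\star$ lies in $l(\alpha)$ exactly when its first-coordinate projection equals $\alpha$. Letting $\pi\colon (X \times V)^\star \to X^\star$ be the homomorphism $(x,v) \mapsto x$, this reads: $w \in l(\alpha)$ precisely when $\pi(w) = \alpha$. Hence $L_{\uparrow V} = \bigcup_{\alpha \in L} l(\alpha) = \setArg{w}{\pi(w) \in L} = \pi^{-1}(L)$, and the three properties for lifting follow from the general fact about $\pi^{-1}$.

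Next I would handle expansion the same way, the only change being the choice of homomorphism. Here the relevant map is the erasing projection $p\colon (X \setunion V)^\star \to X^\star$ that fixes every letter of $X$ and deletes every letter of $V \setminus X$. The defining description of $L_{\Uparrow V}$ — words obtained from those of $L$ by inserting arbitrary strings over $V \setminus X$ anywhere — is exactly the assertion that $w$ lies in the expansion iff deleting its $V \setminus X$ letters returns a word of $L$; that is, $L_{\Uparrow V} = p^{-1}(L)$. Once more, order-preservation and commutation with intersection and complementation are immediate.

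I expect the only genuine obstacle to be the bookkeeping inside these two identifications, rather than any conceptual difficulty. The care is in pinning down how the letter maps $l$ and $e$ are extended to strings (recall $e$ is explicitly \emph{not} a substitution, so for expansion I would work from the ``insert $(V \setminus X)^\star$ anywhere'' characterization instead of a homomorphic extension) and in the boundary cases — in particular the empty word and letters lying in $X \setint V$, where $p$ keeps the letter rather than deleting it. After $L_{\uparrow V} = \pi^{-1}(L)$ and $L_{\Uparrow V} = p^{-1}(L)$ are secured, order-preservation, commutation with intersection (the source multiplication), and commutation with complementation (the involution) are all just the standard closure properties of preimage, with no further calculation required.
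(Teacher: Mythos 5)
The paper does not actually include a proof of Proposition \ref{prop:eqlang:defop:structure-preserving} --- it is one of the proofs ``omitted due to space constraints'' --- so there is no in-paper argument to compare yours against. Your proposal is correct and complete in outline: the identifications $L_{\uparrow V}=\pi^{-1}(L)$ for the first-coordinate projection $\pi\colon (X\times V)^\star\to X^\star$ and $L_{\Uparrow V}=p^{-1}(L)$ for the erasing homomorphism $p\colon (X\setunion V)^\star\to X^\star$ are exactly right, and monotonicity together with preservation of $\setint$ and $\complement{(\cdot)}$ then follows from the standard Boolean-algebra properties of preimages of total maps. One small caveat worth recording if you write this up: the paper declares $e(\epsilon)=V^\star$ rather than $(V\setminus X)^\star$, which disagrees with $p^{-1}(\epsilon)$ exactly when $X\setint V\neq\emptyset$ and $\epsilon\in L$; since every use of expansion in the paper takes $X$ and $V$ disjoint, this does not affect the result, but you should either assume disjointness or read $e(\epsilon)=(V\setminus X)^\star$ for the identification to be literally an equality.
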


\subsection{Composition of languages and inequalities involving languages}
\label{subsec:eqlang:comp}

Consider two systems $A$ and $B$ with associated languages $L(A)$ and $L(B)$.
The systems communicate with each other by a channel $U$ and with the
environment by channels $I$ and $O$.
The following two well-studied operators describe the external behavior
of the composition of $L(A)$ and $L(B)$.
\begin{definition}
Given the disjoint alphabets $I, U, O$, a language $L_1$ over $I \times U$, and a
language $L_2$ over $U \times O$, the {\bf synchronous composition} of
languages $L_1$ and $L_2$ is the language $(L_1)_{\uparrow O} \setint (L_2)_{\uparrow I}$,
denoted by $L_1 \bullet L_2$, defined over $I \times U \times O$.
\end{definition}
\begin{definition}
Given the disjoint alphabets $I, U, O$, a language $L_1$ over $I \setunion U$, and
a language $L_2$ over $U \setunion O$, the {\bf parallel composition} of
languages $L_1$ and $L_2$ is the language
$(L_1)_{\Uparrow O} \setint (L_2)_{\Uparrow I}$,
denoted by $L_1 \diamond L_2$, defined over $I \setunion U \setunion O$.
\end{definition}
\paragraph{Example. } Let $L_1 = \{a, aa\}$ be a language of the alphabet $\Sigma_1 = \{a, b\}$, and $\Sigma_2 = \{c, d\}$ be another alphabet for which $L_2 = \{c\}$ is a language. Then $L_1 \bullet L_2 = \{(a, c)\}$ and $L_1 \diamond L_2 = \{ac, ca, caa, aca, aac\}$.

Synchronous composition abstracts the parallel execution of modules in lock
step, assuming a global clock and instant communication by a broadcasting
mechanism, modeling the product semantics common in the hardware community. In
asynchronous composition modules execute independently at different speeds
assuming clocks which progress at arbitrary rates relative to one another,
modeling the interleaving semantics common in the software community. A
comparison can be found in~\cite{kurshan-asynch99}. Now we show that we can
interpret the above products as the source multiplication of a \compStructName.
For each product, we first need to identify a suitable indexing semilattice.
Then we need to build the appropriate \structName{s} and their maps.

\subsubsection{Synchronous equations}

\paragraph{Semilattice.} Suppose we have a disjoint family $F = \{\Sigma_i\}_{1 \le i \le n}$ of alphabets for some positive integer $n$, and let $S = 2^F$. Then $S$ is a semilattice under the operation of set union, i.e., if $x, y \in S$, we have $xy = x \setunion y$.

\paragraph{\StructName{s}.} For any $x \in S$, let $\card{x}$ be the cardinality of $x$. There exist natural numbers $k_1, \dots, k_{\card{x}}$ such that $x = \{\Sigma_{k_j}\}_{1 \le j \le \card{x}}  \subseteq F$ and $1 \le k_i < k_j \le n$ for $i < j$. We map each $x$ to a \structName as follows.
We define the alphabet over $x$ as $\alpha(x) = \Sigma_{k_1} \times \cdots \times \Sigma_{k_{\card{x}}}$, and we set $P_x = 2^{\alpha(x)^*}$. Source multiplication $\mu_x$ for $P_x$ is intersection, and involution $\inv_x$ is complementation. $(P_x, \le_x, \mu_x, \inv_x)$ is a Boolean lattice, thus a \structName, as shown in Section \ref{sec:preordHeaps}.

\paragraph{Concretizations.} For $x, y \in S$, $P_{xy}$ is clearly a \structName because $xy \in S$. 
We also define the \structName $P_{x,y} = 2^{\Sigma_{x,y}^*}$ for $\Sigma_{x,y} = \alpha(x) \times \alpha(y \setminus x)$ with source multiplication equal to set intersection and involution equal to complementation. Note that the only difference between $P_{xy}$ and $P_{x,y}$ is the order in which the alphabets $\Sigma_i$ appear in each: $P_{xy}$ contains all sets of finite strings over the alphabet $\alpha(xy)$, and $P_{x,y}$ contains all sets of finite strings over the alphabet $\alpha(x) \times \alpha(y - x)$. Thus, $P_{xy}$ and $P_{x,y}$ are isomorphic as sets. Let $\beta\colon P_{x,y} \to P_{xy}$ be this isomorphism, which is easily seen to be a \structName isomorphism. This allows us to define the concretization $\iota_x$ as follows:
{\scriptsize $
    \begin{tikzcd}
        & & P_{xy} \arrow[d, "\beta", leftarrow] 
         \\
        P_x \arrow[rr, "(\cdot) \uparrow_{\alpha(y - x)}", pos=0.7]
        \arrow[urr, dashed, "\iota_x"] & & P_{x,y} 
    \end{tikzcd}
$}
.

From Proposition
    \ref{prop:eqlang:defop:structure-preserving}, we know that $(\cdot)\uparrow_{\alpha(y - x)}$ is a
    \structName map. Thus, we have an $S$-\compStructName $\{(P_x, \le_x, \mu_x, \inv_x)\}_{x \in S}$.
Since \compStructName{s} are \structName{s} (Theorem \ref{kiybd9qybo}), for $A \in P_x$ and $B \in P_y$, an equation of the form $A \bullet z \le B$ has the largest solution $Z \in P_{xy}$ with
\[Z = \neg\left(\neg \beta' \left(B\uparrow_{\alpha(x \setminus y)}\right) \setint \,\,
                      \beta'' \left(A\uparrow_{\alpha(y \setminus x)}\right)\right),\]
where $\beta'\colon P_{y,x} \to P_{xy}$ and $\beta''\colon P_{x,y} \to P_{xy}$ are extensions of the alphabet permutations to languages, as described above.

\paragraph{Example.} Let $I$, $U$, and $O$ be disjoint alphabets. Then $S$ consists of all subsets of $\{I, O, U\}$. Let $i = \{I\}$, $u = \{U\}$, and $o= \{O\}$. The \structName $P_{iu}$ consists of all languages over the alphabet $I \times U$. $P_{uo}$ consists of all languages over $U \times O$. If $L_1 \in P_{iu}$, the concretization $\iota \colon P_{iu} \to p_{iuo}$ maps $L_1$ to a language over $I \times U \times O$. Observe that the order in which each alphabet appears is important and set from the beginning; this eliminates any potential ambiguities with the ordering of the alphabets (e.g., is it the alphabet $I \times U$ or $U \times I$?). By definition, this concretization map is $(\cdot)\uparrow_{O}$. In the same way, the concretization $\iota' \colon P_{uo} \to p_{iuo}$ is $\beta \circ (\cdot)\uparrow_{I}$, where $\beta: P_{uo,i} \to P_{iuo}$ permutes the symbols of the language so that they appear in the order $(a, b , c)$ with $a \in I$, $b \in U$, and $c \in O$. Thus, source multiplication is $\smult(L_1, L_2) = L_1\uparrow_{O} \setint \beta\left(L_2\uparrow_{I}\right)$, which is the synchronous product.

\subsubsection{Asynchronous equations}

Now we form a semilattice $S$ whose elements are abstract sets and whose operation
is set union. Let $x \in S$, and define $P_x = 2^{x^*}$. For $y \in S$, the concretization
$
    \begin{tikzcd}
        P_x \arrow[r, "\iota"]
        & P_{xy}
    \end{tikzcd}
$
is $\iota = (\cdot)\Uparrow_{y - x}$. Proposition \ref{prop:eqlang:defop:structure-preserving} shows that $\iota$ is a \structName map. Thus, we have a \compStructName $\{(P_x, \le_x, \mu_x, \inv_x)\}_{x \in S}$.

Since \compStructName{s} are \structName{s} (Theorem \ref{kiybd9qybo}), we are in a position to solve language equations under asynchronous composition. Let $x, y \in S$, $A \in P_x$ and $B \in P_y$. The largest solution to the equation $A \diamond z \le B$ yields $Z \in P_{xy}$ with $Z = \neg\left(\neg B \Uparrow_{x - y} \setint \,\,A\Uparrow_{y - x}\right)$.

\paragraph{Example.} As before, let $I$, $U$, and $O$ be disjoint alphabets, and let $I, U, O \in S$, where $S$ is a semilattice with the operation of set union. The \structName $P_{IU}$ consists of all languages over $I \setunion U$. Similarly, the \structName $P_{UO}$ consists of all languages over $U \setunion O$. The embedding $\iota \colon P_{IU} \to P_{IUO}$ is simply $(\cdot)\Uparrow_{O}$, and the embedding $\iota' \colon P_{UO} \to P_{IUO}$ is $(\cdot)\Uparrow_{I}$. Thus, for $L_1 \in P_{IU}$ and $L_2 \in P_{UO}$, source multiplication is $\mu(L_1, L_2) = L_1\Uparrow_O \cap L_2 \Uparrow_I$, which is the asynchronous product.

\section{Conclusions}
\label{sec:conclusions}

The comparison of the closed form computation of quotients ranging from language equations to AG contracts suggested     
a new algebraic structure, called {\em \structName}, endowed with the axioms
of preorders, together with a monotonic multiplication and an involution.
We showed that an \propName condition allows to solve equations over  
\structName{s}, and we gave the closed form of the solution. 
We showed that various theories qualify as \structName{s} and therefore admit
such explicit solution.
In particular, we showed that the conditions for being \structName{s} hold 
for Boolean lattices, assume-guarantee contracts,    
and for interface automata: in all cases we were able to derive 
axiomatically the quotients, which had been previously obtained by specific
analysis of each theory.
Finally we defined equations over \compStructName{s} to handle
components defined over multiple alphabets, and rederived as special cases the
solution of language equations known in the literature.


\subsection*{Acknowledgements}

We are grateful for the comments of our anonymous reviewers. This work was supported in part by NSF Contract CPS Medium 1739816; MIUR, Project ``Italian Outstanding Departments, 2018-2022''; INDAM, GNCS 2020, ``Strategic Reasoning and Automated Synthesis of Multi-Agent Systems''; University of Verona, Cooperint 2019, Program for Visiting Researchers.

\bibliographystyle{eptcs}
\bibliography{support/references}

\begin{thebibliography}{10}
\providecommand{\bibitemdeclare}[2]{}
\providecommand{\surnamestart}{}
\providecommand{\surnameend}{}
\providecommand{\urlprefix}{Available at }
\providecommand{\url}[1]{\texttt{#1}}
\providecommand{\href}[2]{\texttt{#2}}
\providecommand{\urlalt}[2]{\href{#1}{#2}}
\providecommand{\doi}[1]{doi:\urlalt{http://dx.doi.org/#1}{#1}}
\providecommand{\bibinfo}[2]{#2}

\bibitemdeclare{article}{S1S-tcad2000}
\bibitem{S1S-tcad2000}
\bibinfo{author}{A.~\surnamestart Aziz\surnameend},
  \bibinfo{author}{F.~\surnamestart Balarin\surnameend}, \bibinfo{author}{R.K.
  \surnamestart Brayton\surnameend} \& \bibinfo{author}{A.~L. \surnamestart
  Sangiovanni-Vincentelli\surnameend} (\bibinfo{year}{2000}):
  \emph{\bibinfo{title}{{Sequential synthesis using S1S}}}.
\newblock {\sl \bibinfo{journal}{IEEE Transactions on Computer-Aided Design}}
  \bibinfo{volume}{19}(\bibinfo{number}{10}), pp. \bibinfo{pages}{1149--1162},
  \doi{10.1109/43.875301}.

\bibitemdeclare{article}{barrett-deds98}
\bibitem{barrett-deds98}
\bibinfo{author}{G.~\surnamestart Barrett\surnameend} \&
  \bibinfo{author}{S.~\surnamestart Lafortune\surnameend}
  (\bibinfo{year}{1998}): \emph{\bibinfo{title}{Bisimulation, the Supervisory
  Control Problem and Strong Model Matching for Finite State Machines}}.
\newblock {\sl \bibinfo{journal}{Discrete Event Dynamic Systems: Theory \&
  Applications}} \bibinfo{volume}{8}(\bibinfo{number}{4}), pp.
  \bibinfo{pages}{377--429}, \doi{10.1023/A:1008301317459}.

\bibitemdeclare{inproceedings}{benes2013}
\bibitem{benes2013}
\bibinfo{author}{Nikola \surnamestart Bene{\v{s}}\surnameend},
  \bibinfo{author}{Beno{\^i}t \surnamestart Delahaye\surnameend},
  \bibinfo{author}{Uli \surnamestart Fahrenberg\surnameend},
  \bibinfo{author}{Jan \surnamestart K{\v{r}}et{\'i}nsk{\'y}\surnameend} \&
  \bibinfo{author}{Axel \surnamestart Legay\surnameend} (\bibinfo{year}{2013}):
  \emph{\bibinfo{title}{Hennessy-Milner Logic with Greatest Fixed Points as a
  Complete Behavioural Specification Theory}}.
\newblock In \bibinfo{editor}{Pedro~R. \surnamestart D'Argenio\surnameend} \&
  \bibinfo{editor}{Hern{\'a}n \surnamestart Melgratti\surnameend}, editors:
  {\sl \bibinfo{booktitle}{CONCUR 2013 -- Concurrency Theory}},
  \bibinfo{publisher}{Springer Berlin Heidelberg}, \bibinfo{address}{Berlin,
  Heidelberg}, pp. \bibinfo{pages}{76--90}, \doi{10.1007/978-3-642-40184-8_7}.

\bibitemdeclare{inproceedings}{multViewpoint}
\bibitem{multViewpoint}
\bibinfo{author}{A.~\surnamestart Benveniste\surnameend},
  \bibinfo{author}{B.~\surnamestart Caillaud\surnameend},
  \bibinfo{author}{A.~\surnamestart Ferrari\surnameend},
  \bibinfo{author}{L.~\surnamestart Mangeruca\surnameend},
  \bibinfo{author}{R.~\surnamestart Passerone\surnameend} \&
  \bibinfo{author}{C.~\surnamestart Sofronis\surnameend}
  (\bibinfo{year}{2007}): \emph{\bibinfo{title}{Multiple Viewpoint
  Contract-Based Specification and Design}}.
\newblock In \bibinfo{editor}{Frank~S. \surnamestart de~Boer\surnameend},
  \bibinfo{editor}{Marcello~M. \surnamestart Bonsangue\surnameend},
  \bibinfo{editor}{Susanne \surnamestart Graf\surnameend} \&
  \bibinfo{editor}{Willem-Paul \surnamestart de~Roever\surnameend}, editors:
  {\sl \bibinfo{booktitle}{Formal Methods for Components and Objects}},
  \bibinfo{publisher}{Springer Berlin Heidelberg}, \bibinfo{address}{Berlin,
  Heidelberg}, pp. \bibinfo{pages}{200--225},
  \doi{10.1007/978-3-540-92188-2_9}.

\bibitemdeclare{article}{BenvenisteContractBook}
\bibitem{BenvenisteContractBook}
\bibinfo{author}{A.~\surnamestart Benveniste\surnameend},
  \bibinfo{author}{B.~\surnamestart Caillaud\surnameend},
  \bibinfo{author}{D.~\surnamestart Nickovic\surnameend},
  \bibinfo{author}{R.~\surnamestart Passerone\surnameend},
  \bibinfo{author}{J.-B. \surnamestart Raclet\surnameend},
  \bibinfo{author}{P.~\surnamestart Reinkemeier\surnameend},
  \bibinfo{author}{A.~L. \surnamestart Sangiovanni-Vincentelli\surnameend},
  \bibinfo{author}{W.~\surnamestart Damm\surnameend}, \bibinfo{author}{T.~A.
  \surnamestart Henzinger\surnameend} \& \bibinfo{author}{K.~G. \surnamestart
  Larsen\surnameend} (\bibinfo{year}{2018}): \emph{\bibinfo{title}{Contracts
  for System Design}}.
\newblock {\sl \bibinfo{journal}{Foundations and
  Trends$^{\text{\scriptsize{\textregistered}}}$\hspace{-.3em} in Electronic
  Design Automation}} \bibinfo{volume}{12}(\bibinfo{number}{2-3}), pp.
  \bibinfo{pages}{124--400}, \doi{10.1561/1000000053}.

\bibitemdeclare{article}{bhaduri08}
\bibitem{bhaduri08}
\bibinfo{author}{P.~\surnamestart Bhaduri\surnameend} \&
  \bibinfo{author}{S.~\surnamestart Ramesh\surnameend} (\bibinfo{year}{2008}):
  \emph{\bibinfo{title}{Interface synthesis and protocol conversion}}.
\newblock {\sl \bibinfo{journal}{Formal Asp. Comput.}}
  \bibinfo{volume}{20}(\bibinfo{number}{2}), pp. \bibinfo{pages}{205--224},
  \doi{10.1007/s00165-007-0045-4}.

\bibitemdeclare{article}{bochmann-deds2013}
\bibitem{bochmann-deds2013}
\bibinfo{author}{G.~\surnamestart Bochmann\surnameend} (\bibinfo{year}{2013}):
  \emph{\bibinfo{title}{Using logic to solve the submodule construction
  problem}}.
\newblock {\sl \bibinfo{journal}{Discrete Event Dynamic Systems}}
  \bibinfo{volume}{23}(\bibinfo{number}{1}), pp. \bibinfo{pages}{27--59},
  \doi{10.1007/s10626-011-0127-6}.

\bibitemdeclare{article}{bouyer2011}
\bibitem{bouyer2011}
\bibinfo{author}{Patricia \surnamestart Bouyer\surnameend},
  \bibinfo{author}{Franck \surnamestart Cassez\surnameend} \&
  \bibinfo{author}{Fran{\c{c}}ois \surnamestart Laroussinie\surnameend}
  (\bibinfo{year}{2011}): \emph{\bibinfo{title}{Timed Modal Logics for
  Real-Time Systems - Specification, Verification and Control}}.
\newblock {\sl \bibinfo{journal}{Journal of Logic, Language and Information}}
  \bibinfo{volume}{20}(\bibinfo{number}{2}), pp. \bibinfo{pages}{169--203},
  \doi{10.1007/s10849-010-9127-4}.

\bibitemdeclare{inproceedings}{burch-comb-cycle}
\bibitem{burch-comb-cycle}
\bibinfo{author}{J.R. \surnamestart Burch\surnameend},
  \bibinfo{author}{D.~\surnamestart Dill\surnameend},
  \bibinfo{author}{E.~\surnamestart Wolf\surnameend} \&
  \bibinfo{author}{G.~\surnamestart DeMicheli\surnameend}
  (\bibinfo{year}{1993}): \emph{\bibinfo{title}{Modelling hierarchical
  combinational circuits}}.
\newblock In: {\sl \bibinfo{booktitle}{The Proceedings of the International
  Conference on Computer-Aided Design}}, pp. \bibinfo{pages}{612--617},
  \doi{10.1109/ICCAD.1993.580149}.

\bibitemdeclare{inproceedings}{cassez2000}
\bibitem{cassez2000}
\bibinfo{author}{Franck \surnamestart Cassez\surnameend} \&
  \bibinfo{author}{Fran{\c{c}}ois \surnamestart Laroussinie\surnameend}
  (\bibinfo{year}{2000}): \emph{\bibinfo{title}{Model-Checking for Hybrid
  Systems by Quotienting and Constraints Solving}}.
\newblock In \bibinfo{editor}{E.~Allen \surnamestart Emerson\surnameend} \&
  \bibinfo{editor}{Aravinda~Prasad \surnamestart Sistla\surnameend}, editors:
  {\sl \bibinfo{booktitle}{Computer Aided Verification}},
  \bibinfo{publisher}{Springer Berlin Heidelberg}, \bibinfo{address}{Berlin,
  Heidelberg}, pp. \bibinfo{pages}{373--388}, \doi{10.1007/10722167_29}.

\bibitemdeclare{inproceedings}{Castagnetti:SEFM-2015}
\bibitem{Castagnetti:SEFM-2015}
\bibinfo{author}{G.~\surnamestart Castagnetti\surnameend},
  \bibinfo{author}{M.~\surnamestart Piccolo\surnameend},
  \bibinfo{author}{T.~\surnamestart Villa\surnameend},
  \bibinfo{author}{N.~\surnamestart Yevtushenko\surnameend},
  \bibinfo{author}{R.~K. \surnamestart Brayton\surnameend} \&
  \bibinfo{author}{A.~\surnamestart Mishchenko\surnameend}
  (\bibinfo{year}{2015}): \emph{\bibinfo{title}{Automated Synthesis of Protocol
  Converters with {BALM-II}}}.
\newblock In \bibinfo{editor}{D.~\surnamestart Bianculli\surnameend},
  \bibinfo{editor}{R.~\surnamestart Calinescu\surnameend} \&
  \bibinfo{editor}{B.~\surnamestart Rumpe\surnameend}, editors: {\sl
  \bibinfo{booktitle}{Software Engineering and Formal Methods. SEFM 2015
  Collocated Workshops: ATSE, HOFM, MoKMaSD, and VERY*SCART. York, UK,
  September 7-8, 2015}}, pp. \bibinfo{pages}{281--296},
  \doi{10.1007/978-3-662-49224-6_23}.

\bibitemdeclare{article}{cerny-marin77}
\bibitem{cerny-marin77}
\bibinfo{author}{E.~\surnamestart Cerny\surnameend} \&
  \bibinfo{author}{M.~\surnamestart Marin\surnameend} (\bibinfo{year}{1977}):
  \emph{\bibinfo{title}{An approach to unified methodology of combinational
  switching circuits}}.
\newblock {\sl \bibinfo{journal}{IEEE Transactions on Computers}}
  \bibinfo{volume}{vol. C-26}(\bibinfo{number}{8}), pp.
  \bibinfo{pages}{745--756}, \doi{10.1109/TC.1977.1674912}.

\bibitemdeclare{incollection}{verhoeff-mpc1989}
\bibitem{verhoeff-mpc1989}
\bibinfo{author}{W.~\surnamestart Chen\surnameend},
  \bibinfo{author}{J.~\surnamestart Udding\surnameend} \&
  \bibinfo{author}{T.~\surnamestart Verhoeff\surnameend}
  (\bibinfo{year}{1989}): \emph{\bibinfo{title}{Networks of communicating
  processes and their \mbox{(de-)composition}}}.
\newblock In \bibinfo{editor}{J.L.A. \surnamestart van~de
  Snepscheut\surnameend}, editor: {\sl \bibinfo{booktitle}{Mathematics of
  Program Construction}}, {\sl \bibinfo{series}{Lecture Notes in Computer
  Science}} \bibinfo{volume}{375}, \bibinfo{publisher}{Springer Berlin
  Heidelberg}, pp. \bibinfo{pages}{174--196}, \doi{10.1007/3-540-51305-1_10}.

\bibitemdeclare{inproceedings}{dealfaro-interface03}
\bibitem{dealfaro-interface03}
\bibinfo{author}{L.~\surnamestart De~Alfaro\surnameend} (\bibinfo{year}{2003}):
  \emph{\bibinfo{title}{Game Models for Open Systems}}.
\newblock In \bibinfo{editor}{N.~\surnamestart Dershowitz\surnameend}, editor:
  {\sl \bibinfo{booktitle}{Verification: Theory and Practice}}, {\sl
  \bibinfo{series}{Lecture Notes in Computer Science}} \bibinfo{volume}{2772},
  \bibinfo{publisher}{Springer Verlag}, pp. \bibinfo{pages}{269--289},
  \doi{10.1007/978-3-540-39910-0_12}.

\bibitemdeclare{article}{deAlfaro2001}
\bibitem{deAlfaro2001}
\bibinfo{author}{L.~\surnamestart De~Alfaro\surnameend} \&
  \bibinfo{author}{T.~A. \surnamestart Henzinger\surnameend}
  (\bibinfo{year}{2001}): \emph{\bibinfo{title}{Interface Automata}}.
\newblock {\sl \bibinfo{journal}{SIGSOFT Softw. Eng. Notes}}
  \bibinfo{volume}{26}(\bibinfo{number}{5}), pp. \bibinfo{pages}{109--120},
  \doi{10.1145/503209.503226}.

\bibitemdeclare{article}{mm-tac01}
\bibitem{mm-tac01}
\bibinfo{author}{M.~D. \surnamestart Di~Benedetto\surnameend},
  \bibinfo{author}{A.~\surnamestart Sangiovanni-Vincentelli\surnameend} \&
  \bibinfo{author}{T.~\surnamestart Villa\surnameend} (\bibinfo{year}{2001}):
  \emph{\bibinfo{title}{{Model Matching for Finite State Machines}}}.
\newblock {\sl \bibinfo{journal}{IEEE Transactions on Automatic Control}}
  \bibinfo{volume}{46}(\bibinfo{number}{11}), pp. \bibinfo{pages}{1726--1743},
  \doi{10.1109/9.964683}.

\bibitemdeclare{inproceedings}{iwls10:ch2}
\bibitem{iwls10:ch2}
\bibinfo{author}{M.~\surnamestart Fujita\surnameend},
  \bibinfo{author}{Y.~\surnamestart Matsunaga\surnameend} \&
  \bibinfo{author}{M.~\surnamestart Ciesielski\surnameend}
  (\bibinfo{year}{2001}): \emph{\bibinfo{title}{Multi-Level Logic
  Optimization}}.
\newblock In \bibinfo{editor}{R.~\surnamestart Brayton\surnameend},
  \bibinfo{editor}{S.~\surnamestart Hassoun\surnameend} \&
  \bibinfo{editor}{T.~\surnamestart Sasao\surnameend}, editors: {\sl
  \bibinfo{booktitle}{Logic Synthesis and Verification}},
  \bibinfo{publisher}{Kluwer}, pp. \bibinfo{pages}{29--63},
  \doi{10.1007/978-1-4615-0817-5_2}.

\bibitemdeclare{article}{green-tcomm-1986}
\bibitem{green-tcomm-1986}
\bibinfo{author}{P.~\surnamestart Green\surnameend} (\bibinfo{year}{1986}):
  \emph{\bibinfo{title}{Protocol Conversion}}.
\newblock {\sl \bibinfo{journal}{IEEE Transactions on Communications}}
  \bibinfo{volume}{34}(\bibinfo{number}{3}), pp. \bibinfo{pages}{257--268},
  \doi{10.1109/32.4655}.

\bibitemdeclare{article}{haghverdi99}
\bibitem{haghverdi99}
\bibinfo{author}{E.~\surnamestart Haghverdi\surnameend} \&
  \bibinfo{author}{H.~\surnamestart Ural\surnameend} (\bibinfo{year}{1999}):
  \emph{\bibinfo{title}{Submodule construction from concurrent system
  specifications}}.
\newblock {\sl \bibinfo{journal}{Information and Software Technology}}
  \bibinfo{volume}{41}(\bibinfo{number}{8}), pp. \bibinfo{pages}{499--506},
  \doi{10.1016/S0950-5849(99)00014-2}.

\bibitemdeclare{inproceedings}{hallal-conv2000}
\bibitem{hallal-conv2000}
\bibinfo{author}{H.~\surnamestart Hallal\surnameend},
  \bibinfo{author}{R.~\surnamestart Negulescu\surnameend} \&
  \bibinfo{author}{A.~\surnamestart Petrenko\surnameend}
  (\bibinfo{year}{2000}): \emph{\bibinfo{title}{Design of divergence-free
  protocol converters using supervisory control techniques}}.
\newblock In: {\sl \bibinfo{booktitle}{7th IEEE International Conference on
  Electronics, Circuits and Systems, ICECS 2000}}, \bibinfo{volume}{2}, pp.
  \bibinfo{pages}{705--708}, \doi{10.1109/ICECS.2000.912975}.

\bibitemdeclare{inproceedings}{iwls10:ch9}
\bibitem{iwls10:ch9}
\bibinfo{author}{S.~\surnamestart Hassoun\surnameend} \&
  \bibinfo{author}{T.~\surnamestart Villa\surnameend} (\bibinfo{year}{2001}):
  \emph{\bibinfo{title}{Optimization of Synchronous Circuits}}.
\newblock In \bibinfo{editor}{R.~\surnamestart Brayton\surnameend},
  \bibinfo{editor}{S.~\surnamestart Hassoun\surnameend} \&
  \bibinfo{editor}{T.~\surnamestart Sasao\surnameend}, editors: {\sl
  \bibinfo{booktitle}{Logic Synthesis and Verification}},
  \bibinfo{publisher}{Kluwer}, pp. \bibinfo{pages}{225--253},
  \doi{10.1007/978-1-4615-0817-5_2}.

\bibitemdeclare{book}{ullman-automata2}
\bibitem{ullman-automata2}
\bibinfo{author}{J.E. \surnamestart Hopcroft\surnameend},
  \bibinfo{author}{R.~\surnamestart Motwani\surnameend} \&
  \bibinfo{author}{J.D. \surnamestart Ullman\surnameend}
  (\bibinfo{year}{2001}): \emph{\bibinfo{title}{{Introduction to Automata
  Theory, Languages, and Computation}}}.
\newblock \bibinfo{publisher}{Addison-Wesley Publishing Company},
  \doi{10.1145/568438.568455}.

\bibitemdeclare{book}{fsm1-book}
\bibitem{fsm1-book}
\bibinfo{author}{T.~\surnamestart Kam\surnameend},
  \bibinfo{author}{T.~\surnamestart Villa\surnameend}, \bibinfo{author}{R.~K.
  \surnamestart Brayton\surnameend} \& \bibinfo{author}{A.~L. \surnamestart
  Sangiovanni-Vincentelli\surnameend} (\bibinfo{year}{1997}):
  \emph{\bibinfo{title}{Synthesis of {FSM}s: Functional Optimization}}.
\newblock \bibinfo{publisher}{Kluwer Academic Publishers},
  \bibinfo{address}{Boston}, \doi{10.1007/978-1-4757-2622-0}.

\bibitemdeclare{article}{kim72}
\bibitem{kim72}
\bibinfo{author}{J.~\surnamestart Kim\surnameend} \& \bibinfo{author}{M.M.
  \surnamestart Newborn\surnameend} (\bibinfo{year}{1972}):
  \emph{\bibinfo{title}{{The simplification of sequential machines with input
  restrictions}}}.
\newblock {\sl \bibinfo{journal}{IRE Transactions on Electronic Computers}},
  pp. \bibinfo{pages}{1440--1443}, \doi{10.1109/T-C.1972.223521}.

\bibitemdeclare{article}{kumar-conv97}
\bibitem{kumar-conv97}
\bibinfo{author}{R.~\surnamestart Kumar\surnameend},
  \bibinfo{author}{S.~\surnamestart Nelvagal\surnameend} \&
  \bibinfo{author}{S.I. \surnamestart Marcus\surnameend}
  (\bibinfo{year}{1997}): \emph{\bibinfo{title}{A discrete event systems
  approach for protocol conversion}}.
\newblock {\sl \bibinfo{journal}{Discrete Event Dynamic Systems: Theory \&
  Applications}} \bibinfo{volume}{7}(\bibinfo{number}{3}), pp.
  \bibinfo{pages}{295--315}, \doi{10.1023/A:1008258331497}.

\bibitemdeclare{article}{kurshan-asynch99}
\bibitem{kurshan-asynch99}
\bibinfo{author}{R.P. \surnamestart Kurshan\surnameend},
  \bibinfo{author}{M.~\surnamestart Merritt\surnameend},
  \bibinfo{author}{A.~\surnamestart Orda\surnameend} \& \bibinfo{author}{S.R.
  \surnamestart Sachs\surnameend} (\bibinfo{year}{1999}):
  \emph{\bibinfo{title}{Modelling asynchrony with a synchronous model}}.
\newblock {\sl \bibinfo{journal}{Formal Methods in System Design}}
  \bibinfo{volume}{vol. 15}(\bibinfo{number}{no. 3}), pp.
  \bibinfo{pages}{175--199}, \doi{10.1007/3-540-60045-0_61}.

\bibitemdeclare{article}{lam-tse-1988}
\bibitem{lam-tse-1988}
\bibinfo{author}{S.~S. \surnamestart Lam\surnameend} (\bibinfo{year}{1988}):
  \emph{\bibinfo{title}{Protocol Conversion}}.
\newblock {\sl \bibinfo{journal}{IEEE Trans. Softw. Eng.}}
  \bibinfo{volume}{14}(\bibinfo{number}{3}), pp. \bibinfo{pages}{353--362},
  \doi{10.1109/32.4655}.

\bibitemdeclare{inproceedings}{larsen-xinxin1990}
\bibitem{larsen-xinxin1990}
\bibinfo{author}{K.G. \surnamestart Larsen\surnameend} \&
  \bibinfo{author}{L.~\surnamestart Xinxin\surnameend} (\bibinfo{year}{1990}):
  \emph{\bibinfo{title}{Equation solving using modal transition systems}}.
\newblock In: {\sl \bibinfo{booktitle}{Logic in Computer Science, 1990. LICS
  '90, Proceedings., Fifth Annual IEEE Symposium on e}}, pp.
  \bibinfo{pages}{108--117}, \doi{10.1109/LICS.1990.113738}.

\bibitemdeclare{article}{lynch-cwi89}
\bibitem{lynch-cwi89}
\bibinfo{author}{N.~\surnamestart Lynch\surnameend} \&
  \bibinfo{author}{M.~\surnamestart Tuttle\surnameend} (\bibinfo{year}{1989}):
  \emph{\bibinfo{title}{An introduction to Input/Output automata}}.
\newblock {\sl \bibinfo{journal}{CWI-Quarterly}}
  \bibinfo{volume}{2}(\bibinfo{number}{3}), pp. \bibinfo{pages}{219--246},
  \doi{10.1.1.83.7751}.

\bibitemdeclare{inproceedings}{mallon99}
\bibitem{mallon99}
\bibinfo{author}{W.C. \surnamestart Mallon\surnameend}, \bibinfo{author}{J.T.
  \surnamestart Tijmen\surnameend} \& \bibinfo{author}{T.~\surnamestart
  Verhoeff\surnameend} (\bibinfo{year}{1999}): \emph{\bibinfo{title}{Analysis
  and Applications of the {XDI} Model}}.
\newblock In: {\sl \bibinfo{booktitle}{International Symposium on Advanced
  Research in Asynchronous Circuits and Systems}}, pp.
  \bibinfo{pages}{231--242}, \doi{10.1109/ASYNC.1999.761537}.

\bibitemdeclare{article}{boch83}
\bibitem{boch83}
\bibinfo{author}{P.~\surnamestart Merlin\surnameend} \&
  \bibinfo{author}{G.~\surnamestart v.~Bochmann\surnameend}
  (\bibinfo{year}{1983}): \emph{\bibinfo{title}{On the Construction of
  Submodule Specifications and Communication Protocols}}.
\newblock {\sl \bibinfo{journal}{ACM Transactions on Programming Languages and
  Systems}} \bibinfo{volume}{5}(\bibinfo{number}{1}), pp.
  \bibinfo{pages}{1--25}, \doi{10.1145/357195.357196}.

\bibitemdeclare{inproceedings}{negu-concur00}
\bibitem{negu-concur00}
\bibinfo{author}{R.~\surnamestart Negulescu\surnameend} (\bibinfo{year}{2000}):
  \emph{\bibinfo{title}{Process spaces}}.
\newblock In \bibinfo{editor}{C.~\surnamestart Palamidessi\surnameend}, editor:
  {\sl \bibinfo{booktitle}{Proceedings of CONCUR 2000, 11th International
  Conference on Concurrency Theory}}, {\sl \bibinfo{series}{LNCS}}
  \bibinfo{volume}{1877}, \bibinfo{publisher}{Springer-Verlag}, pp.
  \bibinfo{pages}{199--213}, \doi{10.1007/3-540-44618-4_16}.

\bibitemdeclare{inproceedings}{DBLP:conf/iccad/PasseroneAHS02}
\bibitem{DBLP:conf/iccad/PasseroneAHS02}
\bibinfo{author}{R.~\surnamestart Passerone\surnameend},
  \bibinfo{author}{L.~\surnamestart De~Alfaro\surnameend},
  \bibinfo{author}{T.~A. \surnamestart Henzinger\surnameend} \&
  \bibinfo{author}{A.~L. \surnamestart Sangiovanni-Vincentelli\surnameend}
  (\bibinfo{year}{2002}): \emph{\bibinfo{title}{Convertibility verification and
  converter synthesis: two faces of the same coin}}.
\newblock In \bibinfo{editor}{Lawrence~T. \surnamestart Pileggi\surnameend} \&
  \bibinfo{editor}{Andreas \surnamestart Kuehlmann\surnameend}, editors: {\sl
  \bibinfo{booktitle}{ICCAD}}, \bibinfo{publisher}{ACM}, pp.
  \bibinfo{pages}{132--139}.
\newblock \urlprefix\url{http://doi.acm.org/10.1145/774572.774592}.

\bibitemdeclare{article}{contractMerging}
\bibitem{contractMerging}
\bibinfo{author}{R.~\surnamestart Passerone\surnameend},
  \bibinfo{author}{{\'I}.~\surnamestart {\'I}ncer~Romeo\surnameend} \&
  \bibinfo{author}{A.~L. \surnamestart Sangiovanni-Vincentelli\surnameend}
  (\bibinfo{year}{2019}): \emph{\bibinfo{title}{Coherent Extension,
  Composition, and Merging Operators in Contract Models for System Design}}.
\newblock {\sl \bibinfo{journal}{ACM Trans. Embed. Comput. Syst.}}
  \bibinfo{volume}{18}(\bibinfo{number}{5s}), \doi{10.1145/3358216}.

\bibitemdeclare{inproceedings}{DBLP:conf/dac/PasseroneRS98}
\bibitem{DBLP:conf/dac/PasseroneRS98}
\bibinfo{author}{R.~\surnamestart Passerone\surnameend}, \bibinfo{author}{J.~A.
  \surnamestart Rowson\surnameend} \& \bibinfo{author}{A.~L. \surnamestart
  Sangiovanni-Vincentelli\surnameend} (\bibinfo{year}{1998}):
  \emph{\bibinfo{title}{Automatic Synthesis of Interfaces Between Incompatible
  Protocols}}.
\newblock In: {\sl \bibinfo{booktitle}{DAC}}, pp. \bibinfo{pages}{8--13}.
\newblock \urlprefix\url{http://doi.acm.org/10.1145/277044.277047}.

\bibitemdeclare{article}{raclet2011modal}
\bibitem{raclet2011modal}
\bibinfo{author}{J.-B. \surnamestart Raclet\surnameend},
  \bibinfo{author}{E.~\surnamestart Badouel\surnameend},
  \bibinfo{author}{A.~\surnamestart Benveniste\surnameend},
  \bibinfo{author}{B.~\surnamestart Caillaud\surnameend},
  \bibinfo{author}{A.~\surnamestart Legay\surnameend} \&
  \bibinfo{author}{R.~\surnamestart Passerone\surnameend}
  (\bibinfo{year}{2011}): \emph{\bibinfo{title}{A modal interface theory for
  component-based design}}.
\newblock {\sl \bibinfo{journal}{Fundamenta Informaticae}}
  \bibinfo{volume}{108}(\bibinfo{number}{1-2}), pp. \bibinfo{pages}{119--149},
  \doi{10.3233/FI-2011-416}.

\bibitemdeclare{inproceedings}{agquotient}
\bibitem{agquotient}
\bibinfo{author}{\'{I}. \surnamestart \'{I}ncer Romeo\surnameend},
  \bibinfo{author}{A.~L. \surnamestart Sangiovanni-Vincentelli\surnameend},
  \bibinfo{author}{C.-W. \surnamestart Lin\surnameend} \&
  \bibinfo{author}{E.~\surnamestart Kang\surnameend} (\bibinfo{year}{2018}):
  \emph{\bibinfo{title}{Quotient for Assume-Guarantee Contracts}}.
\newblock In: {\sl \bibinfo{booktitle}{16th ACM-IEEE International Conference
  on Formal Methods and Models for System Design}}, \bibinfo{series}{MEMOCODE
  ’18}, p. \bibinfo{pages}{67–77}, \doi{10.1109/MEMCOD.2018.8556872}.

\bibitemdeclare{inproceedings}{iwls10:ch3}
\bibitem{iwls10:ch3}
\bibinfo{author}{E.~\surnamestart Sentovich\surnameend} \&
  \bibinfo{author}{D.~\surnamestart Brand\surnameend} (\bibinfo{year}{2001}):
  \emph{\bibinfo{title}{Flexibility in Logic}}.
\newblock In \bibinfo{editor}{R.~\surnamestart Brayton\surnameend},
  \bibinfo{editor}{S.~\surnamestart Hassoun\surnameend} \&
  \bibinfo{editor}{T.~\surnamestart Sasao\surnameend}, editors: {\sl
  \bibinfo{booktitle}{Logic Synthesis and Verification}},
  \bibinfo{publisher}{Kluwer}, pp. \bibinfo{pages}{65--88},
  \doi{10.1007/978-1-4615-0817-5_2}.

\bibitemdeclare{article}{ucp-ieeeproc2015}
\bibitem{ucp-ieeeproc2015}
\bibinfo{author}{T.~\surnamestart Villa\surnameend},
  \bibinfo{author}{A.~\surnamestart Petrenko\surnameend},
  \bibinfo{author}{N.~\surnamestart Yevtushenko\surnameend},
  \bibinfo{author}{A.~\surnamestart Mishchenko\surnameend} \&
  \bibinfo{author}{R.~K. \surnamestart Brayton\surnameend}
  (\bibinfo{year}{2015}): \emph{\bibinfo{title}{Component-Based Design by
  Solving Language Equations}}.
\newblock {\sl \bibinfo{journal}{Proceedings of the IEEE}}
  \bibinfo{volume}{103}(\bibinfo{number}{11}), pp. \bibinfo{pages}{2152--2167},
  \doi{10.1109/JPROC.2015.2450937}.

\bibitemdeclare{book}{villa12}
\bibitem{villa12}
\bibinfo{author}{T.~\surnamestart Villa\surnameend},
  \bibinfo{author}{N.~\surnamestart Yevtushenko\surnameend},
  \bibinfo{author}{R.~K. \surnamestart Brayton\surnameend},
  \bibinfo{author}{A.~\surnamestart Mishchenko\surnameend},
  \bibinfo{author}{A.~\surnamestart Petrenko\surnameend} \&
  \bibinfo{author}{A.~L. \surnamestart Sangiovanni-Vincentelli\surnameend}
  (\bibinfo{year}{2012}): \emph{\bibinfo{title}{The Unknown Component Problem:
  Theory and Applications}}.
\newblock \bibinfo{publisher}{Springer}, \doi{10.1007/978-0-387-68759-9}.

\bibitemdeclare{inproceedings}{fujita-aspdac2007}
\bibitem{fujita-aspdac2007}
\bibinfo{author}{S.~\surnamestart Watanabe\surnameend},
  \bibinfo{author}{K.~\surnamestart Seto\surnameend},
  \bibinfo{author}{Y.~\surnamestart Ishikawa\surnameend},
  \bibinfo{author}{S.~\surnamestart Komatsu\surnameend} \&
  \bibinfo{author}{M.~\surnamestart Fujita\surnameend} (\bibinfo{year}{2007}):
  \emph{\bibinfo{title}{Protocol Transducer Synthesis using Divide and Conquer
  approach}}.
\newblock In: {\sl \bibinfo{booktitle}{Design Automation Conference, 2007.
  ASP-DAC '07. Asia and South Pacific}}, pp. \bibinfo{pages}{280--285},
  \doi{10.1109/ASPDAC.2007.357999}.

\bibitemdeclare{inproceedings}{fsmeq1-iwls2004}
\bibitem{fsmeq1-iwls2004}
\bibinfo{author}{N.~\surnamestart Yevtushenko\surnameend},
  \bibinfo{author}{T.~\surnamestart Villa\surnameend}, \bibinfo{author}{R.~K.
  \surnamestart Brayton\surnameend}, \bibinfo{author}{A.~\surnamestart
  Mishchenko\surnameend} \& \bibinfo{author}{A.~L. \surnamestart
  Sangiovanni-Vincentelli\surnameend} (\bibinfo{year}{2004}):
  \emph{\bibinfo{title}{Composition Operators in Language Equations}}.
\newblock In: {\sl \bibinfo{booktitle}{International Workshop on Logic and
  Synthesis}}, pp. \bibinfo{pages}{409--415}.

\end{thebibliography}


\end{document}